\documentclass[11pt,a4paper]{article}
\usepackage[utf8]{inputenc}
\usepackage{amsmath,amssymb,amsthm}
\usepackage{booktabs}
\usepackage{graphicx}
\usepackage{bm}
\usepackage[expansion=false]{microtype}
\usepackage[margin=1in]{geometry}
\usepackage{hyperref}
\hypersetup{colorlinks=true,linkcolor=blue,citecolor=blue,urlcolor=blue}

\newtheorem{theorem}{Theorem}[section]
\newtheorem{proposition}[theorem]{Proposition}
\newtheorem{lemma}[theorem]{Lemma}
\newtheorem{corollary}[theorem]{Corollary}
\theoremstyle{definition}

\theoremstyle{remark}
\newtheorem{remark}[theorem]{Remark}

\DeclareMathOperator{\Tr}{Tr}
\DeclareMathOperator{\diag}{diag}
\DeclareMathOperator{\Rot}{Rot}

\newcommand{\I}{\mathbb{1}}
\newcommand{\openone}{\mathbb{1}}% REVTeX alias for Paper II's transferred text

\newcommand{\nEB}{n_{\mathrm{EB}}}
\newcommand{\floor}{n_{\mathrm{EB}}^{\mathrm{floor}}}
\newcommand{\nuc}[1]{\left\lVert #1 \right\rVert_{*}}
\newcommand{\opn}[1]{\left\lVert #1 \right\rVert_{2}}
\newcommand{\Frob}[1]{\left\lVert #1 \right\rVert_{F}}
\newcommand{\EB}{\mathrm{EB}}
\newcommand{\PPT}{\mathrm{PPT}}
\providecommand{\ket}[1]{|#1\rangle}
\providecommand{\bra}[1]{\langle#1|}

\title{The quantum lifetime of a future-referential feedback loop:\\
certified index, architecture floor, and thermal bonus}
\author{Eran Kopel\\[3pt]
  {\normalsize Tel Aviv University, Tel Aviv, Israel}\\[2pt]
  {\normalsize \href{mailto:erankopel@tauex.tau.ac.il}{erankopel@tauex.tau.ac.il}}\\[2pt]
  {\normalsize ORCID: \href{https://orcid.org/0000-0003-4657-8636}{0000-0003-4657-8636}}}
\date{\today}

\begin{document}
\maketitle

\begin{abstract}
A quantum feedback loop that returns information from a forward simulation to
an earlier internal time induces a completely positive trace-preserving map on
a message register; iterating that map eventually destroys its ability to
carry entanglement. The entanglement-breaking index $N=\nEB(\Phi)$, the round
at which this happens, is studied here along two axes: the interaction and the
bath. On the interaction axis we prove that strict contraction onto a
full-rank fixed point forces $N$ finite in every dimension, via an explicit
separable ball around the limiting Choi state; the qubit closed form
overshoots the exact answer by $\ln(2\sqrt3)/\ln3$ on isotropic unital
channels, and across $3997$ sampled channels index and stability gap are one
clock, $N\,[1-\rho(A)]\in[0.60,1.50]$. On the bath axis, replacing the
zero-temperature ancilla by a thermal one at polarisation
$p=\tanh(\hbar\omega/2k_BT)$ makes every channel quantity an exact quadratic
polynomial in $p$, and the same one-sided certificates (no eigensolver)
establish $N=3$ at the reference circuit at every temperature, uniformly on a
parameter box. The infinite-temperature index is an architecture floor with
closed form and exact measure ($46.427\%$ of circuits break entanglement in
one round however hot the bath). The floor is not a bound: certified circuits
dip below it in interior temperature windows, unit-quantised in depth; over
$4.1$ million circuits their rate follows a Gaussian cutoff in $1-\opn{A}$
with exponent $2.02\,[1.90,2.16]$, and $12$ million more certified circuits
show valleys surviving to $\varepsilon=0.125$. Certified endpoints make
comparisons exact: temperature moves the index by at most $21/14=3/2$; the
interaction moves it by $1213/3$.
\end{abstract}

\section{Introduction}\label{sec:intro}

A quantum system coupled to an environment and driven repeatedly (a feedback
loop, a relay, a repeated measurement) eventually loses the ability to carry
entanglement with anything outside itself. The natural question is not
\emph{whether} but \emph{after how many rounds}, and the natural answer is an
integer.

That integer is the \textbf{entanglement-breaking index} \cite{LG2015}
\begin{equation}\label{eq:neb}
  \nEB(\Phi) \;=\; \min\{\, n \ge 1 \;:\; \Phi^{n} \text{ is entanglement breaking} \,\},
\end{equation}
the smallest number of compositions after which $\Phi^n$ becomes a
measure-and-prepare channel and no entanglement with a reference survives
\cite{HSR2003}. The index is a strictly stronger object than the thresholds
usually reported. This point has recently been made forcefully: for uniform
dephasing and dephasing-plus-damping channels the Choi state is non-PPT at
\emph{every} finite decoherence rate, so any finite ``entanglement-breaking
threshold'' quoted for them is either the coherent-information threshold under
a wrong name, or an artefact of numerical tolerance \cite{NoThreshold}. The
critique is correct, and it is constructive: for a strictly contractive
channel the well-posed quantity is not a single-shot threshold but the
composition index. This paper takes that quantity seriously: computes it,
certifies it, and asks what it depends on.

Two literatures bear on the index and neither answers the question we pose.
The \textbf{structural} literature establishes when an index exists
\cite{RJP2018,HRF2020,CMW2019,Park2026}. The finiteness theorems in this line
take PPT as a hypothesis. The channel studied here is \textbf{NPT}, and
therefore outside every one of them; finiteness follows instead from strict
contraction toward a full-rank fixed point. The \textbf{physical} literature
parameterises channels by physical quantities but stops at single-shot
properties. The generalised amplitude damping channel, the qubit analogue of
the bosonic thermal channel, has known entanglement-breaking and
antidegradability regions in its damping and noise parameters \cite{KSW2020}.
Collision models supply the standard framework in which a system interacts
repeatedly with fresh thermal ancillas \cite{Ciccarello2022}. But a thermal
qubit channel is a GADC, GADCs at fixed noise form a semigroup, and for a
semigroup the index follows in closed form from $\gamma_n = 1-(1-\gamma)^n$.
Nothing needs certifying, and nothing interesting happens. What is absent is
the intersection: an index computed for a channel whose parameters come from a
physical bath, in a case where the family is \emph{not} a semigroup, and
computed rigorously rather than estimated.

We supply that case. The channel is the future-referential feedback loop of
the companion paper \cite{PaperI}: externally an ordinary causally ordered
circuit, internally future-referential, in which a record extracted from a
forward simulation is fed back to an earlier internal time of the simulated
dynamics. Contracting the process tensor with a leakage instrument and a
controller induces a CPTP map $\Phi$ on a message register $M$, and the
self-consistent histories of the protocol are the fixed points
$\rho^{\ast}=\Phi(\rho^{\ast})$. Five operational criteria were used in
\cite{PaperI} to classify them (stability, informativeness, feedability,
coherence, and preservation of quantum correlations), and an explicit
parameter square was certified on which the four nontrivial ones hold
simultaneously. All five are properties of the fixed point; none says how the
loop behaves \emph{on the way there}. The index is the sixth criterion, the
first that is a property of the \emph{trajectory} and the first that is an
integer rather than a yes/no; because $\EB$ is a mapping cone
\cite{Stormer86}, $\Phi^{n}\in\EB$ for every $n\ge\nEB$, so it is a genuine
threshold. This paper computes that threshold along two axes, the interaction
and the bath, and certifies every value it quotes.

On the interaction axis, at zero temperature, the results are of three kinds.
Section~\ref{sec:contractive} proves that strict contraction onto a full-rank
fixed point bounds $\nEB$ explicitly, in every dimension
(Theorem~\ref{thm:generald}), together with a closed form for a qubit register
in terms of the Bloch data (Theorem~\ref{thm:qubit}) whose overshoot on a
natural family is determined exactly (Corollary~\ref{cor:tight}).
Section~\ref{sec:certified} evaluates the index for the partial-swap family of
\cite{PaperI} and certifies $N=3$ uniformly on a four-dimensional parameter
box, in ball arithmetic and without trusting an eigenvalue solver.
Section~\ref{sec:landscape} maps $N$ over a two-dimensional slice and finds
that it is controlled, above and below, by the stability gap alone: the rate
at which the loop forgets its input and the number of rounds it stays quantum
are one clock, not two.

On the bath axis, the loop is already a collision model
(\S\ref{sec:model}): one collision per feedback round, with the ancilla at
zero temperature. Replacing the zero-temperature ancilla by a thermal one at
polarisation $p=\tanh(\hbar\omega/2k_BT)$ therefore buys a physical
temperature axis at no structural cost, and the feedback structure (a
controlled rotation and a partial swap) breaks the semigroup property that
would otherwise trivialise the problem. Three things follow. (i) The index
splits: at infinite bath temperature the channel, and hence the index, is a
functional of the interaction alone, an \emph{architecture floor}
(Theorem~\ref{thm:floor}), and the infinite-temperature channel collapses to a
dephasing map composed with a rotation and a uniform contraction, giving the
floor in closed form (Theorem~\ref{thm:normalform}) and its distribution as an
exact measure. The natural conjecture that the bath term above the floor is
non-negative is \textbf{false}: certified counterexamples exist
(\S\ref{sec:valley}) and their rate is measured. (ii) The two terms are
wildly asymmetric: the ratio of the index at $T=0$ to its value at
$T=\infty$ never exceeds $21/14 = 3/2$ exactly, while the coupling strength
moves it by a factor of $1213/3$. (iii) The physical reading inverts: at
$310$~K every vibrational and electronic degree of freedom sits at
$p\approx1$, the coldest possible bath by this measure, and it is the spin
degrees of freedom that are maximally hot.

Every numerical claim here is a statement about the sign of a real quantity or
the exact value of an integer, and is verified rather than estimated.
Negativity is established by exhibiting a witness vector whose Rayleigh
quotient is certified negative; positivity by certifying all four leading
principal minors. Both primitives are one-sided: a failed certificate is
inconclusive, never wrong. No eigenvalue solver enters at any stage. At zero
temperature the certificate executes $33$ checks in $256$-bit ball arithmetic,
with a second-stack audit at $60$ decimal digits. On the temperature axis the
quadratic dependence on $p$ lets a single interval propagate through a
degree-two polynomial, so adaptive bisection can \emph{cover} a parameter
range rather than sample it: $\nEB = 3$ for every $p\in[0,1]$ at the reference
circuit ($223$ boxes), and for an uncountable family of circuits at every
temperature ($12{,}523$ boxes, none unresolved). To our knowledge these are
the first entanglement-breaking indices certified over a continuum rather than
at points.

A by-product bears on the one question this paper raises and does not settle.
Theorem~\ref{thm:qubit} bounds the index in terms of $\opn{A}$, and
\S\ref{sec:spectral} asks whether the spectral radius $\rho(A)$ may be
substituted, open because $\lVert A^n\rVert \neq \rho(A)^n$ for non-normal
$A$. The weak-coupling limit of the thermal family discriminates sharply: as
the coupling vanishes, the $\rho(A)$ form's prediction converges monotonically
to the certified exact value ($73.95 \to 98.99\%$) while the $\opn{A}$ form's
diverges from it ($99.88 \to 85.17\%$). The conjectured form is asymptotically
saturated in a regime where the proved form is not.

The model is one qubit interacting repeatedly with thermal ancillas under a
specified unitary, and the results are statements about that model. No claim
is made that any biological system realises this loop, that the index bounds
any biological process, or that the surviving rounds have functional
significance. What the construction offers is a certified integer where the
surrounding literature has long had order-of-magnitude estimates.

\section{The feedback loop as a thermal collision model}\label{sec:setup}\label{sec:model}

\subsection{The feedback channel}

We recall only what is needed; see~\cite{PaperI} for the construction. A
process tensor $\Upsilon$ couples a simulated world $W$, a returned message
register $M$, a future-event register $F$, a leakage probe $L$, and the
remaining simulator environment $E$. One feedback round induces
\begin{equation}\label{eq:channel}
  \Phi(\rho_{M})\;=\;\sum_{l}\mathcal{C}_{l}\bigl[\mathcal{M}_{l}
    \bigl(S_{\Upsilon}(\rho_{W}\otimes\rho_{M})\bigr)\bigr],
\end{equation}
with $\mathcal{M}_{l}$ the leakage instrument and $\mathcal{C}_{l}$ the
controller. Throughout, $M$ is a qubit unless stated otherwise. In the Bloch
parametrisation $\rho=(\openone+\bm{v}\cdot\bm{\sigma})/2$ the channel is the
affine map
\begin{equation}\label{eq:bloch}
  \bm{v}\;\longmapsto\;A\bm{v}+\bm{c},
\end{equation}
with $A_{ij}=\tfrac12\Tr[\sigma_i\Phi(\sigma_j)]$ and
$c_i=\Tr[\sigma_i\Phi(\openone/2)]$.
When $\|A\|_{2}<1$ the fixed point is unique, with Bloch vector
$\bm{v}^{\ast}=(\openone-A)^{-1}\bm{c}$. Here and below $\|A\|_{2}$ denotes the
largest singular value of the $3\times3$ real matrix $A$.

Two descriptions of $\Phi$ appear below and it is worth stating once that they
agree. Equation~\eqref{eq:channel} exhibits $\Phi$ as a contraction of the
process tensor against an instrument and a controller; being CPTP, it also
admits a Stinespring dilation, and for the family studied here that dilation is
the single round unitary $U$ on $MFL$ with $F$ and $L$ initialised in
$\ket{00}$ and the world register absorbed into $U$, so that
\begin{equation}\label{eq:dilation}
  \Phi(\cdot)=\Tr_{FL}\bigl[U(\cdot\otimes\ket{00}\bra{00})U^{\dagger}\bigr].
\end{equation}
this is Eqs.~(4)--(9) of \cite[Sec.~IV]{PaperI}.
Equation~\eqref{eq:dilation} is the form evaluated by the certificate and used
throughout Appendix~\ref{app:cert}; Eq.~\eqref{eq:channel} is retained because
it is the form in which the operational criteria are stated.

The family of interest is the four-parameter partial-swap extension of
\cite[Sec.~VI]{PaperI}, with round unitary
$U(\theta,\kappa,\varphi,\beta)=R_{\beta}U_{f}(\varphi)U_{w}(\kappa)U_{W}(\theta)$
acting on $MFL$. Its certified reference point is
\begin{align}\label{eq:refpoint}
  (\theta_{0},\varphi_{0})/\pi &= (0.16345853,\,0.20061939),\notag\\
  (\kappa_{0},\beta_{0})/\pi &= (0.43230980,\,0.23903823),
\end{align}
about which \cite[Prop.~4]{PaperI} certifies a square of half-width
$0.0013\pi$ in $\theta$ and $\varphi$, at fixed $\kappa_{0}$ and $\beta_{0}$,
carrying simultaneously a contraction margin $1-\|A\|_{2}\ge0.237$, coherence
$C_{\ell_{1}}\ge0.416$, information $I(F{:}L)\ge0.172$ bits, and an NPT margin
$-\lambda_{\min}[J^{T_{R}}]\ge0.188$.

We will certify the index not on that square but on the full four-dimensional
box of the same half-width,
\begin{equation}\label{eq:box}
  \mathcal{B}\;=\;\bigl\{p:\|p-p_{0}\|_{\infty}\le s\bigr\},\qquad
  s=0.0013\pi,
\end{equation}
with $p=(\theta,\kappa,\varphi,\beta)$, which contains the square of
Proposition~4 as its $(\theta,\varphi)$ slice.

\subsection{The loop is a collision model}

A fresh ancilla in $\ket{00}$ is an ancilla at zero temperature: the
$\ket{00}$ of the construction above is the $p=1$ end of the thermal family
introduced now.

The channel is defined by a fixed three-qubit unitary $U$ acting on a message register $M$, a future register $F$ and a latent register $L$, with $F$ and $L$ prepared afresh each round in $\ket{00}$ and traced out, as in \eqref{eq:dilation}. This is precisely the structure of a \textbf{quantum collision model} \cite{Ciccarello2022}: a system interacting repeatedly with fresh, identically prepared ancillas, one collision per round. The identification is not an analogy. It is the same object, and it supplies the physical parameter the construction was missing.

Explicitly, with $R_y(a) = \cos\tfrac a2\,\I - i\sin\tfrac a2\,Y$,
\begin{align}
  U &\;=\; \big(R_y(\beta)_M\otimes\I_{FL}\big)\;U_f(\varphi)\;U_w(\kappa)\;U_W(a_0,a_1), \label{eq:U}\\
  U_W &\;=\; |0\rangle\!\langle0|_M\otimes R_y(a_0)_F\otimes\I_L \;+\; |1\rangle\!\langle1|_M\otimes R_y(a_1)_F\otimes\I_L, \nonumber\\
  U_w &\;=\; \exp\!\big(-i\tfrac{\kappa}{2}\,\I_M\otimes Z_F\otimes Y_L\big), \qquad
  U_f \;=\; \cos\varphi\,\I - i\sin\varphi\,\mathrm{SWAP}_{ML}. \nonumber
\end{align}
The construction of \cite{PaperI} is the \emph{supplementary} case $a_0 = \pi-2\theta$, $a_1 = 2\theta$; we keep the branch angles independent where it costs nothing, because Theorem~\ref{thm:normalform} and Lemma~\ref{lem:structuralzero} are both sharper that way.

A fresh ancilla in $|00\rangle$ is an ancilla at zero temperature. Replacing it with a thermal state promotes the entire apparatus to a function of temperature:
\begin{equation}\label{eq:thermalloop}
  \Phi_p(\rho) \;=\; \Tr_{FL}\big[\,U\,(\rho \otimes \tau_S(p)^{\otimes 2})\,U^{\dagger}\,\big],
  \qquad \tau_S(p) = \tfrac12\big(\I + p\,S\big),
\end{equation}
for a traceless Hermitian involution $S$ ($S^2 = \I$) fixing the bath's polarisation axis. For a two-level ancilla of splitting $\hbar\omega$ in equilibrium at temperature $T$,
\begin{equation}\label{eq:p}
  p \;=\; \tanh\!\Big(\frac{\hbar\omega}{2k_BT}\Big),
\end{equation}
so $p=1$ is $T=0$ and $p=0$ is $T=\infty$. Writing the qubit channel in Bloch affine form $\Phi_p : r \mapsto A(p)\,r + c(p)$, every quantity of the series (stability margin, coherence, fixed point \cite{PaperI}, and the index) becomes a function of the bath temperature at no structural cost.

\begin{remark}[why the thermal bath, not the radiation field]
At physiological temperature the dominant decohering interactions in condensed phase are collisional, Coulombic and vibrational; blackbody photon absorption and emission on a molecular dipole is slower by more than twenty orders of magnitude once the long-wavelength suppression $(\Delta x/\lambda)^2$ is included \cite{Tegmark2000,Schlosshauer2019}. The bath here is therefore a generic thermal ancilla, not a radiation mode. The infrared enters this paper at exactly one point, in \S\ref{sec:physical}.
\end{remark}

\subsection{Exact form of the thermal family}

\begin{proposition}\label{prop:quadratic}
$A(p)$ and $c(p)$ are quadratic matrix polynomials in $p$.
\end{proposition}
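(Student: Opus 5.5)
The plan is to exploit the fact that $\Phi_p$ depends on $p$ only through the ancilla state, and that state is multilinear in the two single-ancilla polarisations. First expand the thermal ancilla. Since $\tau_S(p)=\tfrac12(\I+pS)$,
\begin{equation*}
  \tau_S(p)^{\otimes 2}=\tfrac14\bigl(\I\otimes\I+p\,(S\otimes\I+\I\otimes S)+p^{2}\,S\otimes S\bigr),
\end{equation*}
which is a quadratic polynomial in $p$ with operator coefficients $B_0=\tfrac14\I_{FL}$, $B_1=\tfrac14(S_F+S_L)$ and $B_2=\tfrac14 S_F S_L$. None of these coefficients depends on $p$.

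Next, substitute this expansion into \eqref{eq:thermalloop}. The map $\sigma\mapsto\Tr_{FL}[U(\rho\otimes\sigma)U^\dagger]$ is linear in $\sigma$ for fixed $\rho$, and $U$ does not depend on $p$ (it depends only on $a_0,a_1,\kappa,\varphi,\beta$). It follows that
\begin{equation*}
  \Phi_p(\rho)=\Phi^{(0)}(\rho)+p\,\Phi^{(1)}(\rho)+p^2\,\Phi^{(2)}(\rho),\qquad
  \Phi^{(k)}(\rho)=\Tr_{FL}\bigl[U(\rho\otimes B_k)U^\dagger\bigr].
\end{equation*}
Each $\Phi^{(k)}$ is a fixed linear map on operators on $M$. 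It need not be CPTP, but linearity is all we need.

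Finally, read off the Bloch data using the formulas below \eqref{eq:bloch}: $A_{ij}(p)=\tfrac12\Tr[\sigma_i\Phi_p(\sigma_j)]$ and $c_i(p)=\Tr[\sigma_i\Phi_p(\I/2)]$. Both are linear functionals of $\Phi_p$, so inserting the expansion gives
\begin{equation*}
  A(p)=A_0+pA_1+p^2A_2,\qquad c(p)=c_0+pc_1+p^2c_2,
\end{equation*}
where $(A_k)_{ij}=\tfrac12\Tr[\sigma_i\Phi^{(k)}(\sigma_j)]$ and $(c_k)_i=\Tr[\sigma_i\Phi^{(k)}(\I/2)]$. These are real because each $B_k$ is Hermitian and the $\Phi^{(k)}$ are therefore Hermiticity-preserving. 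This proves the claim.

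There is no real obstacle here; the one point needing care is that the degree is at most two. This holds because exactly two ancillas enter. If one wants the quadratic term to be genuinely present, one would need to check that $A_2\neq0$ for the circuit in question, for example at the reference point \eqref{eq:refpoint}, but the proposition as stated does not require this. The argument uses nothing about $S$ beyond its $p$-independence, and in particular does not use $S^2=\I$ or tracelessness.
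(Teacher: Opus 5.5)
Your proof is correct and follows the paper's argument. Expand $\tau_S(p)^{\otimes 2}$ into its three $p$-independent operator coefficients, use linearity of $\Phi_p$ in the ancilla state, and read off $A(p)$ and $c(p)$ as linear functionals of $\Phi_p$; your explicit maps $\Phi^{(k)}$ and the realness check on $A_k$, $c_k$ are details the paper leaves implicit.
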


\begin{proof}
$\Phi_p$ depends on $p$ only through the ancilla state, and depends on it linearly. Since
\[
  \tau_S(p)^{\otimes 2} \;=\; \tfrac14\Big(\I\otimes\I \;+\; p\,(S\otimes\I + \I\otimes S) \;+\; p^{2}\, S\otimes S\Big),
\]
the map $p \mapsto \Phi_p$ is a quadratic polynomial with operator coefficients, and so are its Bloch data:
\[
  A(p) = A_0 + p\,A_1 + p^2 A_2, \qquad c(p) = c_0 + p\,c_1 + p^2 c_2. \qedhere
\]
\end{proof}

Two consequences. First, the temperature dependence is \emph{exactly} representable: no expansion or truncation is required. Second, certification extends over the temperature axis using the machinery already built for a single point: one interval in $p$ propagates through a polynomial of degree two. This is what makes the covering results of \S\ref{sec:numerics} affordable.

\section{The entanglement-breaking index}
\label{sec:index}

A completely positive map is entanglement breaking iff its Choi matrix is
separable~\cite{HSR2003}. We use the normalised Choi matrix
\begin{equation}\label{eq:choi}
  J(\Phi)=\frac1d\sum_{i,j=1}^{d}\ket{i}\bra{j}\otimes\Phi(\ket{i}\bra{j}),
  \qquad \Tr J(\Phi)=1,
\end{equation}
and write $J^{T_{R}}$ for its partial transpose on the reference factor.

For a qubit register the classification problem collapses. By the
Peres--Horodecki criterion~\cite{Peres96,Horodecki96} a state of two qubits is
separable if and only if it is PPT, so $\EB=\PPT$ in $M_{2}$ and
\begin{equation}\label{eq:indexformula}
  \nEB(\Phi)\;=\;\min\bigl\{\,n\ge1:\ \lambda_{\min}\bigl[J(\Phi^{n})^{T_{R}}\bigr]\ge0\,\bigr\}.
\end{equation}
The index is thus \emph{exactly computable}: no separability oracle, and no
entanglement witness, is required. Every step of
Sec.~\ref{sec:certified} rests on Eq.~\eqref{eq:indexformula}.

This is the same collapse that makes the base case of Park's induction
immediate (``every PPT map on $M_{2}$ is EB and hence
$\nEB(\Phi)=1$''~\cite{Park2026}), and it cuts both ways. It is what makes the
index of our loop computable to certified precision; it is also what makes the
qubit case unable to host the bound-entangled regime, since a PPT point of a
qubit channel is EB with index one. We return to this in Sec.~\ref{sec:outlook}.

\section{The contractive route to eventual entanglement breaking}
\label{sec:contractive}

The feedback channels of Sec.~\ref{sec:setup} are not PPT, so
\cite[Thm.~1.1]{Park2026} does not apply to them, yet their index is finite. The
reason is that they contract strictly onto a fixed point of full rank. This
section makes that mechanism quantitative, and shows it is not special to a
qubit register.

\subsection{Notation}

For a channel $\Phi:M_d\to M_d$ we use the normalised Choi matrix
\begin{equation}
  J(\Phi) \;=\; \frac1d\sum_{i,j=1}^{d}\ket{i}\bra{j}\otimes\Phi(\ket{i}\bra{j}),
  \qquad \operatorname{Tr}J(\Phi)=1 ,
\end{equation}
and write $\|\cdot\|_F$, $\|\cdot\|_\infty$, $\|\cdot\|_1$ for the Frobenius,
spectral and trace norms. For the $3\times3$ real Bloch matrix we follow the
usual convention and write $\|A\|_2\equiv\|A\|_\infty$ for its largest singular
value; this is the operator norm, \emph{not} the Schatten-$2$ norm. The Hilbert--Schmidt norm of a superoperator is
$\|\Lambda\|_{\mathrm{HS}}^2=\sum_{\alpha}\|\Lambda(E_\alpha)\|_F^2$ for any
orthonormal basis $\{E_\alpha\}$ of $M_d$. Following \cite{LG2015,Park2026},
$n_{\mathrm{EB}}(\Phi)=\inf\{n\ge1:\Phi^n\in\mathrm{EB}\}$; since $\mathrm{EB}$
is a mapping cone, $\Phi^{n}\in\mathrm{EB}$ for all $n\ge n_{\mathrm{EB}}$, so
the index is a genuine threshold.

Throughout, $\Phi$ has a unique fixed state $\rho^\ast$ and we set
\begin{equation}
  \eta \;:=\; \sup\bigl\{\|\Phi(X)\|_F \;:\; \operatorname{Tr}X=0,\ \|X\|_F=1\bigr\},
\end{equation}
the Frobenius contraction coefficient on the traceless subspace. For a qubit
channel with Bloch pair $(A,c)$ one has $\eta=\|A\|_2$, and we write
$a:=\|A\|_2$, $v^\ast=(\openone-A)^{-1}c$, $r_\ast:=\|v^\ast\|_2$. One notational
warning: \cite{PaperI} writes $r(A)$ for the \emph{spectral radius} of the Bloch
matrix; we write $\rho(A)$ for that quantity throughout, and reserve $r_\ast$
for the Bloch radius of the fixed point.

\subsection{Three lemmas}

\begin{lemma}[Choi--Frobenius isometry]\label{lem:iso}
For channels $\Phi,\Phi':M_d\to M_d$,
\begin{equation}
  \bigl\|J(\Phi)-J(\Phi')\bigr\|_F \;=\; \tfrac1d\,\|\Phi-\Phi'\|_{\mathrm{HS}} .
\end{equation}
For $d=2$ in Bloch form this reads
\begin{equation}\label{eq:iso2}
  \bigl\|J(\Phi)-J(\Phi')\bigr\|_F
  \;=\; \tfrac12\sqrt{\|A-A'\|_F^{2}+\|c-c'\|_2^{2}} .
\end{equation}
\end{lemma}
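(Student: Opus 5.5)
The plan is to prove the general-$d$ identity by linearity and an expansion in the matrix-unit basis, and then to obtain the qubit form \eqref{eq:iso2} by evaluating the Hilbert--Schmidt norm in the normalised Pauli basis.

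\textbf{General $d$.} Set $\Lambda=\Phi-\Phi'$. The map $\Phi\mapsto J(\Phi)$ is linear, so $J(\Phi)-J(\Phi')=\frac1d\sum_{i,j}\ket{i}\bra{j}\otimes\Lambda(\ket{i}\bra{j})$. The operators $\ket{i}\bra{j}\otimes Y$ with distinct index pairs $(i,j)$ are Hilbert--Schmidt orthogonal, because $\Tr[(\ket{i}\bra{j})^\dagger\ket{k}\bra{l}]=\delta_{ik}\delta_{jl}$. Hence
\[
\|J(\Phi)-J(\Phi')\|_F^2=\frac1{d^2}\sum_{i,j}\|\Lambda(\ket{i}\bra{j})\|_F^2 .
\]
The matrix units $E_{ij}=\ket{i}\bra{j}$ form an orthonormal basis of $M_d$, so the right-hand side is $\frac1{d^2}\|\Lambda\|_{\mathrm{HS}}^2$. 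Taking square roots gives the first claim. I also need that the definition of $\|\Lambda\|_{\mathrm{HS}}$ does not depend on the chosen orthonormal basis. This holds because $\sum_\alpha\|\Lambda(E_\alpha)\|_F^2=\Tr(\Lambda^\dagger\Lambda)$, computed with $\Lambda$ viewed as an operator on the Hilbert space $(M_d,\langle\cdot,\cdot\rangle_{\mathrm{HS}})$, and that trace is invariant under unitary changes of basis.

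\textbf{Qubit case.} Take the orthonormal basis $E_0=\I/\sqrt2$, $E_k=\sigma_k/\sqrt2$ for $k=1,2,3$. From the Bloch form \eqref{eq:bloch} I have $\Phi(\I)=\I+\bm c\cdot\bm\sigma$, so $\Lambda(\I)=(\bm c-\bm c')\cdot\bm\sigma$. Likewise $\Phi(\sigma_j)=\sum_i A_{ij}\sigma_i$, since trace preservation kills the identity component, and so $\Lambda(\sigma_j)=\sum_i(A-A')_{ij}\sigma_i$. Using $\|\bm u\cdot\bm\sigma\|_F^2=2\|\bm u\|_2^2$, I get $\|\Lambda(E_0)\|_F^2=\|c-c'\|_2^2$ and $\sum_{j}\|\Lambda(E_j)\|_F^2=\|A-A'\|_F^2$. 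Dividing the sum by $d^2=4$ and taking the square root yields \eqref{eq:iso2}.

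\textbf{Main obstacle.} There is no real obstacle. The only step needing care is the normalisation bookkeeping in the Bloch form. First, $c_i=\Tr[\sigma_i\Phi(\I/2)]$ means $\Phi(\I)=\I+\bm c\cdot\bm\sigma$ with no extra factor. Second, $A_{ij}=\tfrac12\Tr[\sigma_i\Phi(\sigma_j)]$ gives exactly the coefficient of $\sigma_i$. With these two checks, the factors of $\sqrt2$ from the Pauli normalisation cancel correctly.
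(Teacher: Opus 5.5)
Your proof is correct and follows the paper's argument: orthogonality of the Choi blocks reduces $\|J(\Phi)-J(\Phi')\|_F^2$ to a sum over matrix units, and the qubit case is the Hilbert--Schmidt norm evaluated in the normalised Pauli basis, where trace preservation leaves only the $\Delta c$ and $\Delta A$ blocks. Your remark that $\sum_\alpha\|\Lambda(E_\alpha)\|_F^2=\Tr(\Lambda^\dagger\Lambda)$ is basis-independent justifies a step the paper only builds into its definition of $\|\cdot\|_{\mathrm{HS}}$.
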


\begin{proof}
Put $\Delta=\Phi-\Phi'$. The blocks of
$J(\Phi)-J(\Phi')=\frac1d\sum_{ij}\ket{i}\bra{j}\otimes\Delta(\ket{i}\bra{j})$
are mutually orthogonal, so
$\|J(\Phi)-J(\Phi')\|_F^2=\frac1{d^2}\sum_{ij}\|\Delta(\ket{i}\bra{j})\|_F^2$.
The matrix units $\{\ket{i}\bra{j}\}$ form an orthonormal basis of $M_d$, so
that sum is $\|\Delta\|_{\mathrm{HS}}^2$. For $d=2$, in the orthonormal basis
$\{\openone/\sqrt2,\sigma_1/\sqrt2,\sigma_2/\sqrt2,\sigma_3/\sqrt2\}$ a
trace-preserving map has superoperator matrix
$\bigl(\begin{smallmatrix}1&0\\ c&A\end{smallmatrix}\bigr)$; hence $\Delta$ has
matrix $\bigl(\begin{smallmatrix}0&0\\ \Delta c&\Delta A\end{smallmatrix}\bigr)$
and $\|\Delta\|_{\mathrm{HS}}^2=\|\Delta A\|_F^2+\|\Delta c\|_2^2$.
\end{proof}

Note that Eq.~\eqref{eq:iso2} is an \emph{identity}, and that trace preservation
is essential: for maps of unequal trace the missing $(0,0)$ entry contributes
$|\Delta t|^2$ and the equality fails.

\begin{lemma}[Fixed-point collapse]\label{lem:collapse}
Let $\Phi$ be a qubit channel with $\|A\|_2<1$. Then $\Phi^n$ has Bloch pair
$(A^n,c_n)$ with $c_n=(\openone+A+\dots+A^{n-1})c$, and
\begin{equation}
  c_n-v^\ast \;=\; -\,A^{n}v^\ast .
\end{equation}
\end{lemma}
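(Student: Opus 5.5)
The plan is to argue by induction on $n$ at the level of Bloch pairs, and then read off the stated identity from the fixed-point equation.

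\textbf{Step 1: composition law.} Each iterate is again an affine map of the Bloch ball. Composing $\bm{v}\mapsto A\bm{v}+\bm{c}$ with itself gives
\begin{equation*}
  \bm{v}\mapsto A(A\bm{v}+\bm{c})+\bm{c}.
\end{equation*}
So if $\Phi^{n}$ has Bloch pair $(A^{n},c_{n})$, then $\Phi^{n+1}=\Phi\circ\Phi^{n}$ has pair $(A^{n+1},\,Ac_{n}+c)$. This gives the recursion
\begin{equation*}
  c_{n+1}=Ac_{n}+c,\qquad c_{1}=c .
\end{equation*}
It is solved by $c_{n}=(\openone+A+\dots+A^{n-1})c$, which is checked directly by induction. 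The representation~\eqref{eq:bloch} is linear in the superoperator, so the Bloch pair of a composition is the composition of the affine maps.

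\textbf{Step 2: fixed point.} The assumption $\|A\|_{2}<1$ gives $\rho(A)\le\|A\|_{2}<1$. Hence $\openone-A$ is invertible and $v^{\ast}=(\openone-A)^{-1}c$ is well defined, equivalently $c=(\openone-A)v^{\ast}$.

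\textbf{Step 3: telescoping.} Substituting Step 2 into the closed form of Step 1 gives
\begin{equation*}
  c_{n}=(\openone+A+\dots+A^{n-1})(\openone-A)v^{\ast}=(\openone-A^{n})v^{\ast},
\end{equation*}
since the product telescopes. Therefore $c_{n}-v^{\ast}=-A^{n}v^{\ast}$, which is the claimed identity.

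No step is a genuine obstacle. The one point needing care is that the invertibility of $\openone-A$ in Step 2 follows from the norm hypothesis, which is the only place that hypothesis is used.
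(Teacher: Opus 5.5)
Your proof is correct and follows the paper's argument. Both reduce to the telescoping identity $(\openone+A+\dots+A^{n-1})(\openone-A)=\openone-A^{n}$ together with invertibility of $\openone-A$; substituting $c=(\openone-A)v^\ast$ merely spares you the commutation step the paper invokes. You also spell out the recursion $c_{n+1}=Ac_n+c$, which the paper takes for granted. Its real justification is that $\bm{v}\mapsto A\bm{v}+\bm{c}$ is the action of $\Phi$ on input Bloch vectors, not that the representation is linear in the superoperator.
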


\begin{proof}
$c_n=(\openone-A)^{-1}(\openone-A^n)c=(\openone-A^n)(\openone-A)^{-1}c
 =(\openone-A^n)v^\ast$, the two factors commuting as functions of $A$.
\end{proof}

Thus the entire deviation of $\Phi^n$ from its limit is carried by powers of
$A$: no separate estimate of the stationary-state drift is needed.

\begin{lemma}[Decay of the Choi matrix]\label{lem:decay}
Let $\Phi:M_d\to M_d$ be a channel with fixed state $\rho^\ast$ and $\eta<1$,
and let $\Phi_\infty(X)=\operatorname{Tr}(X)\,\rho^\ast$, so that
$J(\Phi_\infty)=\tfrac{\openone}{d}\otimes\rho^\ast$. Then
\begin{equation}\label{eq:decay}
  \bigl\|J(\Phi^n)-\tfrac{\openone}{d}\otimes\rho^\ast\bigr\|_F
  \;\le\; \frac{\eta^{n}}{d}\,
  \sqrt{\,d\,\bigl\|\tfrac{\openone}{d}-\rho^\ast\bigr\|_F^{2}+d^{2}-1\,}.
\end{equation}
For $d=2$ this is $\tfrac12 a^{n}\sqrt{r_\ast^{2}+3}$.
\end{lemma}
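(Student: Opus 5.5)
The plan is to reduce the Choi-matrix distance to a Hilbert--Schmidt norm of superoperators via Lemma~\ref{lem:iso}, and then evaluate that norm in a basis adapted to the splitting of $M_d$ into the identity direction and the traceless subspace. Both $\Phi^n$ and $\Phi_\infty$ are channels, so Lemma~\ref{lem:iso} applies and gives
\[
  \bigl\|J(\Phi^n)-\tfrac{\openone}{d}\otimes\rho^\ast\bigr\|_F
  =\tfrac1d\,\|\Phi^n-\Phi_\infty\|_{\mathrm{HS}} .
\]
The identification $J(\Phi_\infty)=\tfrac{\openone}{d}\otimes\rho^\ast$ follows directly from $\Phi_\infty(\ket{i}\bra{j})=\delta_{ij}\rho^\ast$.

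Next I would fix the orthonormal basis $E_0=\openone/\sqrt d$, together with any orthonormal basis $\{E_\alpha\}_{\alpha=1}^{d^2-1}$ of the traceless subspace. The definition of $\|\cdot\|_{\mathrm{HS}}$ allows any orthonormal basis, so this choice is free. Two preliminary observations are needed.

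\begin{enumerate}
\item $\Phi$ is trace preserving, so it maps traceless operators to traceless operators. Iterating the definition of $\eta$ therefore gives $\|\Phi^n(X)\|_F\le\eta^n\|X\|_F$ for every traceless $X$.
\item $\Phi_\infty$ annihilates traceless operators and sends $\openone/d$ to $\rho^\ast$.
\end{enumerate}

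With these in hand, each traceless basis element contributes
\[
  \|(\Phi^n-\Phi_\infty)(E_\alpha)\|_F=\|\Phi^n(E_\alpha)\|_F\le\eta^n .
\]
Summing over $\alpha$ gives a total of at most $(d^2-1)\eta^{2n}$.

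The one step requiring a small idea is the identity direction, because $\Phi^n(E_0)$ does not itself decay. The key move is to use the fixed point to rewrite it as the image of a traceless operator:
\[
  (\Phi^n-\Phi_\infty)(E_0)=\sqrt d\,\bigl(\Phi^n(\tfrac{\openone}{d})-\rho^\ast\bigr)
  =\sqrt d\,\Phi^n\bigl(\tfrac{\openone}{d}-\rho^\ast\bigr).
\]
Here the second equality uses $\Phi^n(\rho^\ast)=\rho^\ast$. Since $\tfrac{\openone}{d}-\rho^\ast$ is traceless, observation 1 gives
\[
  \|(\Phi^n-\Phi_\infty)(E_0)\|_F^2\le d\,\eta^{2n}\,\bigl\|\tfrac{\openone}{d}-\rho^\ast\bigr\|_F^2 .
\]
Adding this to the traceless contributions, taking the square root and multiplying by $1/d$ yields Eq.~\eqref{eq:decay}.

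For $d=2$, I would substitute $\rho^\ast=(\openone+v^\ast\cdot\sigma)/2$. Then $\tfrac{\openone}{2}-\rho^\ast=-\tfrac12 v^\ast\cdot\sigma$, whose squared Frobenius norm is $\tfrac14\cdot 2r_\ast^2=r_\ast^2/2$. Hence $d\|\tfrac{\openone}{d}-\rho^\ast\|_F^2=r_\ast^2$ and $d^2-1=3$. Together with $\eta=a$, this gives $\tfrac12a^n\sqrt{r_\ast^2+3}$.

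I do not expect a genuine obstacle. The only care needed is to note that $\eta$ is submultiplicative along iterates, which holds only because traceless operators stay traceless under $\Phi$.
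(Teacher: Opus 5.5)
Your proposal is correct and follows essentially the same route as the paper's proof: Lemma~\ref{lem:iso} to reduce to $\|\Phi^n-\Phi_\infty\|_{\mathrm{HS}}$, the basis $\{\openone/\sqrt d\}\cup\{F_\alpha\}$, and the fixed-point rewriting $\sqrt d\,\Phi^n(\tfrac{\openone}{d}-\rho^\ast)$ for the identity direction. The paper uses, without stating it, that trace preservation keeps traceless operators traceless, which is what gives $\|\Phi^n(X)\|_F\le\eta^n\|X\|_F$; your proposal makes that step explicit.
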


\begin{proof}
Take the orthonormal basis $\{\openone/\sqrt d\}\cup\{F_\alpha\}_{\alpha=1}^{d^2-1}$
of $M_d$ with every $F_\alpha$ traceless. Since $\Phi^n(\rho^\ast)=\rho^\ast$,
\begin{equation}
 (\Phi^n-\Phi_\infty)(\openone/\sqrt d)=\sqrt d\,\bigl[\Phi^n(\openone/d)-\rho^\ast\bigr]
 =\sqrt d\,\Phi^n\!\bigl(\tfrac{\openone}{d}-\rho^\ast\bigr),
\end{equation}
and $\tfrac{\openone}{d}-\rho^\ast$ is traceless, so this has norm at most
$\sqrt d\,\eta^{n}\|\tfrac{\openone}{d}-\rho^\ast\|_F$. On each traceless
$F_\alpha$, $(\Phi^n-\Phi_\infty)(F_\alpha)=\Phi^n(F_\alpha)$ has norm at most
$\eta^n$. Summing squares gives
$\|\Phi^n-\Phi_\infty\|_{\mathrm{HS}}^2\le\eta^{2n}\bigl[d\|\tfrac{\openone}{d}-\rho^\ast\|_F^2+d^2-1\bigr]$,
and Lemma~\ref{lem:iso} supplies the factor $1/d$. For $d=2$,
$\|\tfrac{\openone}{2}-\rho^\ast\|_F=r_\ast/\sqrt2$ and $d^2-1=3$.
\end{proof}

\subsection{The qubit bound}

\begin{theorem}\label{thm:qubit}
Let $\Phi$ be a qubit channel with Bloch pair $(A,c)$, $a=\|A\|_2<1$, and
fixed state $\rho^\ast$ with Bloch vector $v^\ast$, $r_\ast=\|v^\ast\|<1$. Then
\begin{equation}\label{eq:thm1}
  n_{\mathrm{EB}}(\Phi)\;\le\;
  \left\lceil
   \frac{\ln\!\bigl(2\sqrt{r_\ast^{2}+3}\,/\,(1-r_\ast)\bigr)}{\ln(1/a)}
  \right\rceil .
\end{equation}
More sharply,
\begin{equation}\label{eq:thm1sharp}
  n_{\mathrm{EB}}(\Phi)\le\min\Bigl\{n:\
    \sqrt{\|A^{n}\|_F^{2}+\|A^{n}v^\ast\|^{2}}\le\tfrac{1-r_\ast}{2}\Bigr\}.
\end{equation}
\end{theorem}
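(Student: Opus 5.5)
The plan is to exhibit a separable ball around the limiting Choi state $J(\Phi_\infty)=\tfrac{\openone}{2}\otimes\rho^\ast$ and then show that $J(\Phi^n)$ lands in it.

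\textbf{Step 1 (the ball).} Fix $n$ and set $\Delta_n:=J(\Phi^n)-\tfrac{\openone}{2}\otimes\rho^\ast$. Because $\EB=\PPT$ for a qubit register (Eq.~\eqref{eq:indexformula}), it suffices to prove $J(\Phi^n)^{T_R}\ge0$. Partial transposition is a Frobenius isometry, and $(\tfrac{\openone}{2}\otimes\rho^\ast)^{T_R}=\tfrac{\openone}{2}\otimes\rho^\ast$. The eigenvalues of this operator are $\tfrac12\cdot\tfrac{1\pm r_\ast}{2}$, so its smallest eigenvalue is $\tfrac{1-r_\ast}{4}$. By Weyl's inequality,
\[
\lambda_{\min}\bigl[J(\Phi^n)^{T_R}\bigr]\ \ge\ \tfrac{1-r_\ast}{4}-\|\Delta_n^{T_R}\|_\infty\ \ge\ \tfrac{1-r_\ast}{4}-\|\Delta_n\|_F .
\]
Hence $\|\Delta_n\|_F\le\tfrac{1-r_\ast}{4}$ certifies $\Phi^n\in\EB$.

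\textbf{Step 2 (the sharp form).} Next I compute $\|\Delta_n\|_F$ exactly. The replacement channel $\Phi_\infty$ has Bloch pair $(0,v^\ast)$, and by Lemma~\ref{lem:collapse} $\Phi^n$ has Bloch pair $(A^n,c_n)$ with $c_n-v^\ast=-A^nv^\ast$. The identity \eqref{eq:iso2} then gives
\[
\|\Delta_n\|_F=\tfrac12\sqrt{\|A^n\|_F^2+\|A^nv^\ast\|^2}.
\]
Substituting this into the condition $\|\Delta_n\|_F\le\tfrac{1-r_\ast}{4}$ yields exactly \eqref{eq:thm1sharp}. The minimum in \eqref{eq:thm1sharp} exists because the left side tends to $0$. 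Since $\EB$ is a mapping cone, the first such $n$ bounds $\nEB$; one certified $n$ is in any case enough for an upper bound on the infimum.

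\textbf{Step 3 (the closed form).} For \eqref{eq:thm1} I bound crudely:
\[
\|A^n\|_F^2\le 3\|A^n\|_2^2\le 3a^{2n},\qquad \|A^nv^\ast\|\le a^n r_\ast .
\]
This is exactly the $d=2$ case of Lemma~\ref{lem:decay}, giving $\|\Delta_n\|_F\le\tfrac12a^n\sqrt{r_\ast^2+3}$. It therefore suffices that $a^n\le(1-r_\ast)/\bigl(2\sqrt{r_\ast^2+3}\bigr)$, that is, $n\ge\ln\bigl(2\sqrt{r_\ast^2+3}/(1-r_\ast)\bigr)/\ln(1/a)$. The ceiling of that quantity is admissible. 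It is at least $1$ because the logarithm's argument exceeds $1$.

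\textbf{Main obstacle.} The only delicate point is Step 1's radius. I must check that the Frobenius-norm perturbation controls the spectrum of the partial transpose with constant $1$, which needs $\|X\|_\infty\le\|X\|_F$ together with $T_R$ preserving $\|\cdot\|_F$. I must also get the normalisation of $J$ right, so that the minimal eigenvalue is $(1-r_\ast)/4$ and not $(1-r_\ast)/2$. The factor $2$ in \eqref{eq:thm1} comes precisely from this $\tfrac12$ of the Choi normalisation interacting with the $\tfrac12$ in \eqref{eq:iso2}, and I would verify that bookkeeping first.
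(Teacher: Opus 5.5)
Your proof is correct and follows the paper's argument. You use Weyl's inequality around the partial-transpose-invariant limit $\tfrac{\openone}{2}\otimes\rho^\ast$, whose smallest eigenvalue is $(1-r_\ast)/4$, together with $\|\cdot\|_\infty\le\|\cdot\|_F$, the Choi--Frobenius identity and $c_n-v^\ast=-A^nv^\ast$; the only difference is order, since you compute $\|\Delta_n\|_F$ exactly first and then derive the closed form from $\|A^n\|_F^2\le 3a^{2n}$ and $\|A^nv^\ast\|\le a^n r_\ast$, whereas the paper goes through Lemma~\ref{lem:decay} first and obtains the sharp form afterwards.
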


\begin{proof}
$J(\Phi_\infty)=\tfrac{\openone}{2}\otimes\rho^\ast$ is invariant under partial
transposition of the reference factor, and its spectrum is
$\{(1\pm r_\ast)/4\}$, so $\lambda_{\min}[J(\Phi_\infty)^{T_R}]=(1-r_\ast)/4>0$. Partial
transposition preserves $\|\cdot\|_F$, and $\|\cdot\|_\infty\le\|\cdot\|_F$, so
Weyl's inequality together with Lemma~\ref{lem:decay} gives
\begin{equation}\label{eq:weylchain}
  \lambda_{\min}\bigl[J(\Phi^{n})^{T_R}\bigr]
  \;\ge\; \frac{1-r_\ast}{4}-\frac{a^{n}}{2}\sqrt{r_\ast^{2}+3}.
\end{equation}
The right-hand side is nonnegative as soon as
$a^{n}\le(1-r_\ast)\big/\bigl(2\sqrt{r_\ast^{2}+3}\bigr)$, which is
Eq.~\eqref{eq:thm1}. Then $\Phi^n$ is PPT, and since $d=2$ the
Peres--Horodecki criterion \cite{Peres96,Horodecki96} makes it entanglement
breaking. The sharper form follows by using $\|A^n\|_F$ and $\|A^nv^\ast\|$ in
Lemma~\ref{lem:decay} instead of $a^n\sqrt{r_\ast^2+3}$, via
Lemma~\ref{lem:collapse}.
\end{proof}

The hypothesis $r_\ast<1$ says exactly that $\rho^\ast$ has full rank; it cannot be
dropped, since $\Phi(X)=\operatorname{Tr}(X)\ket{0}\bra{0}$ has $r_\ast=1$ and makes
the bound vacuous (although that channel is trivially EB).

\subsection{Every dimension: the mechanism is not a qubit accident}

At $d=2$ the argument leaned on $\mathrm{EB}=\mathrm{PPT}$. It need not.

\begin{theorem}\label{thm:generald}
Let $\Phi:M_d\to M_d$ be a channel with $\eta<1$ and fixed state $\rho^\ast$ of
full rank, $p:=\lambda_{\min}(\rho^\ast)>0$. Put
\begin{equation}\label{eq:Kd}
  K_d \;:=\; \frac{\sqrt{d^{2}-1}}{p\,d}\,
    \sqrt{\,d\bigl\|\tfrac{\openone}{d}-\rho^\ast\bigr\|_F^{2}+d^{2}-1\,}\;.
\end{equation}
Then
\begin{equation}\label{eq:thm2}
  n_{\mathrm{EB}}(\Phi)\;\le\;
  \left\lceil \frac{\ln K_d}{\ln(1/\eta)} \right\rceil \;<\;\infty .
\end{equation}
\end{theorem}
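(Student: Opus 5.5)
The plan is to replace the qubit step ``PPT $\Rightarrow$ EB'' in Theorem~\ref{thm:qubit} by an explicit \emph{separable ball} around the limiting Choi state $J(\Phi_\infty)=\tfrac{\openone}{d}\otimes\rho^\ast$, and then to reuse Lemma~\ref{lem:decay} unchanged. Write $\Delta_n:=J(\Phi^n)-\tfrac{\openone}{d}\otimes\rho^\ast$. This is Hermitian and traceless, since both Choi matrices are normalised. Lemma~\ref{lem:decay} gives
\[
\|\Delta_n\|_F\le \tfrac{\eta^n}{d}\sqrt{d\|\tfrac{\openone}{d}-\rho^\ast\|_F^2+d^2-1}.
\]
It therefore suffices to show that every state $\tfrac{\openone}{d}\otimes\rho^\ast+\Delta$ with $\Delta$ traceless Hermitian and $\|\Delta\|_F\le p/\sqrt{d^2-1}$ is separable. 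Once that is shown, the condition $\eta^nK_d\le1$ is exactly the statement that $\Delta_n$ lies in this ball. Hence $\Phi^n\in\EB$ by \cite{HSR2003}. Because $\EB$ is a mapping cone, the ceiling in Eq.~\eqref{eq:thm2} is an upper bound for the index. It is finite because $K_d<\infty$ (this uses $p>0$) and $\ln(1/\eta)>0$.

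To build the ball, I would first apply the local filter $\openone\otimes(\rho^\ast)^{-1/2}$ on both sides. This is an invertible product operation, so it preserves separability in both directions. It maps $\tfrac{\openone}{d}\otimes\rho^\ast+\Delta$ to
\[
\tfrac{\openone}{d}\otimes\openone+\Delta',\qquad \Delta'=(\openone\otimes\rho^{\ast-1/2})\,\Delta\,(\openone\otimes\rho^{\ast-1/2}).
\]
Submultiplicativity gives $\|\Delta'\|_F\le\|\rho^{\ast-1/2}\|_\infty^2\|\Delta\|_F=\|\Delta\|_F/p$. This is where full rank of $\rho^\ast$ enters, and it explains the factor $1/p$ in $K_d$. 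The filtered operator is a perturbation of the identity on $\mathbb{C}^d\otimes\mathbb{C}^d$. I would then invoke the Gurvits--Barnum ball: a positive operator within Frobenius distance $1/\sqrt{D(D-1)}$ of $\openone/D$ (after normalisation), with $D=d^2$, is separable. An alternative is the unnormalised form, in which $\openone+X$ is separable whenever $\|X\|_F\le1$.

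The main obstacle is the bookkeeping of constants in this last step, so that the radius comes out as $p/\sqrt{d^2-1}$ and not something cruder. The naive unnormalised form only gives the radius $p/d$, which is slightly weaker than needed. The filtered $\Delta'$ is no longer traceless, so renormalising to a state changes the trace, and the resulting shift must be absorbed. My first attempt would be to decompose $\Delta'$ into its trace part and its traceless part. The trace part is a multiple of the identity and only rescales, which does not affect separability. The traceless part would then go into the normalised Gurvits--Barnum radius $1/\sqrt{D(D-1)}=1/(d\sqrt{d^2-1})$ at $D=d^2$. After multiplying back by the factor $d$ from normalisation, this produces exactly the $\sqrt{d^2-1}$ in $K_d$.

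If that accounting does not close, the fallback is to accept the cruder radius, i.e.\ to prove the bound with $K_d$ replaced by $\tfrac1p\sqrt{d\|\tfrac{\openone}{d}-\rho^\ast\|_F^2+d^2-1}$. This still yields finiteness of the index, which is the qualitative content of the theorem.
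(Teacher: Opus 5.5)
Your architecture matches the paper's: Lemma~\ref{lem:decay} plus a separable ball of radius $p/\sqrt{d^2-1}$ around $\tfrac{\openone}{d}\otimes\rho^\ast$, obtained from Gurvits--Barnum. Your reduction to that ball statement is also correct. The gap is in the proof of the ball statement itself. The local filter $G=\openone\otimes\rho^{\ast-1/2}$ cannot deliver that radius, and the trace/traceless bookkeeping you sketch is wrong, not merely unfinished. After filtering, $\tfrac1d\openone+\Delta'$ normalises to $\tfrac{\openone}{D}+\Delta'_0/(d+\Tr\Delta')$, not to $\tfrac{\openone}{D}+\Delta'_0/d$. Moreover, $\Tr\Delta'=\Tr[(\openone\otimes\rho^{\ast-1})\Delta]$ vanishes for all traceless $\Delta$ only when $\rho^\ast=\openone/d$.

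The loss is genuine, as a $d=2$ example shows. Take $\rho^\ast=\diag(p,1-p)$ with $p<\tfrac12$ and $\ket{1}$ its $p$-eigenvector. Take $\Delta=A\otimes\ket{1}\bra{1}+\tfrac t2\,\openone\otimes(\ket{2}\bra{2}-\ket{1}\bra{1})$, with $A$ traceless and $\|A\|_F^2+t^2=p^2/3$, so that $\|\Delta\|_F=p/\sqrt3$ throughout.
\begin{itemize}
\item At $t=0$ the normalised filtered state lies exactly on the boundary of the normalised Gurvits--Barnum ball, whose radius is $1/\sqrt{12}$.
\item For small $t>0$, $\|\Delta'_0\|_F$ changes only at order $t^2$.
\item Meanwhile the normaliser $d+\Tr\Delta'=2-t(1-2p)/[p(1-p)]$ drops linearly in $t$.
\end{itemize}
So filter-plus-Gurvits--Barnum fails to certify points that lie in the ball the theorem needs. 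No bookkeeping along this route can recover the radius $p/\sqrt{d^2-1}$.

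What your proposal actually proves is the fallback: Eq.~\eqref{eq:thm2} with $K_d$ replaced by $\tfrac{d}{\sqrt{d^2-1}}K_d$. That argument is valid: it uses the filter, $\|\Delta'\|_F\le\|\Delta\|_F/p$, and the unnormalised ball $\openone+Y$ with $\|Y\|_F\le1$. It does give finiteness. But at $d=2$ it yields exactly $2\sqrt{r_\ast^2+3}/(1-r_\ast)$, the Weyl constant of Theorem~\ref{thm:qubit}. So the $2/\sqrt3$ gain and the exact tightness $K_2=3$ of Remark~\ref{rem:gbtight} are lost, and the stated theorem is not established.

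The missing idea is to split the limit state convexly instead of filtering. Since $\rho^\ast\ge p\openone$, we can write $\rho^\ast=pd\,\tfrac{\openone}{d}+(1-pd)\,\omega$ with $\omega$ a state. Then
\[
X \;=\; pd\Bigl[\tfrac{\openone}{D}+\tfrac{\Delta}{pd}\Bigr]+(1-pd)\Bigl(\tfrac{\openone}{d}\otimes\omega\Bigr).
\]
Only the $pd$-weighted maximally mixed component has to absorb the perturbation. Because $\Delta$ stays traceless, the bracket has unit trace and no trace shift arises. The bracket lies in the normalised Gurvits--Barnum ball iff $\|\Delta\|_F\le pd/\sqrt{D(D-1)}=p/\sqrt{d^2-1}$. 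In that case $X$ is a convex combination of separable states, which gives exactly the stated $K_d$. The case $pd=1$ is trivial.
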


\begin{proof}
Since $\rho^\ast\ge p\openone$, the operator $\rho^\ast-p\openone$ is positive
with trace $1-pd$, so $\rho^\ast=pd\,\tfrac{\openone}{d}+(1-pd)\,\omega$ for a
state $\omega$ (the case $pd=1$, i.e. $\rho^\ast=\openone/d$, is immediate).
Hence, with $D=d^{2}$,
\begin{equation}
  J(\Phi_\infty)=\tfrac{\openone}{d}\otimes\rho^\ast
  = pd\,\frac{\openone}{D}+(1-pd)\,\Bigl(\tfrac{\openone}{d}\otimes\omega\Bigr),
\end{equation}
the second term separable. Let $X$ be any state and
$\Delta=X-J(\Phi_\infty)$, so $\operatorname{Tr}\Delta=0$ and
\begin{equation}
  X = pd\Bigl[\frac{\openone}{D}+\frac{\Delta}{pd}\Bigr]
      +(1-pd)\Bigl(\tfrac{\openone}{d}\otimes\omega\Bigr).
\end{equation}
By the Gurvits--Barnum separable-ball theorem \cite{GurvitsBarnum02}, every
state within Frobenius distance $1/\sqrt{D(D-1)}$ of $\openone/D$ is separable;
so if $\|\Delta\|_F\le pd/\sqrt{D(D-1)}=p/\sqrt{d^{2}-1}$ then the bracket is a
separable state and $X$, a convex combination of separable states, is separable.
Applying this to $X=J(\Phi^{n})$ and inserting Eq.~\eqref{eq:decay} gives
$\eta^{n}\le 1/K_d$, which is Eq.~\eqref{eq:thm2}.
\end{proof}

Theorem~\ref{thm:generald} is logically independent of \cite{Park2026}: it
assumes contraction and says nothing about PPT-ness, whereas Theorem 1.1 of
\cite{Park2026} assumes PPT and says nothing about contraction. Together they
cover two disjoint mechanisms by which a quantum loop must eventually stop
carrying entanglement. At $d=2$ the two routes may be compared directly:
Theorem~\ref{thm:generald} tolerates
$\|\Delta\|_F\le(1-r_\ast)/\sqrt{12}$ against $(1-r_\ast)/4$ for
Theorem~\ref{thm:qubit}, so the separable-ball route is sharper there by the
factor $2/\sqrt3\simeq1.155$; the loss in Theorem~\ref{thm:qubit} is the
step $\|\cdot\|_\infty\le\|\cdot\|_F$.

\subsection{The unital case is exact, and fixes the constant}

\begin{proposition}\label{prop:unital}
A unital qubit channel is entanglement breaking if and only if
$\|A\|_1\le1$. Consequently
$n_{\mathrm{EB}}(\Phi)=\min\{n\ge1:\|A^{n}\|_1\le1\}$.
\end{proposition}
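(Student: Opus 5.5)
The plan is to reduce to the diagonal Pauli case by local unitaries and then read off the PPT condition directly. For a unital qubit channel $c=0$, so the Choi state is determined by $A$ alone. Take the singular value decomposition $A=O_1 D O_2$ with $O_1,O_2\in SO(3)$ and $D=\diag(s_1,s_2,s_3)$, where the signs are absorbed so that the $s_i$ are real and $|s_i|$ are the singular values. Each $SO(3)$ rotation of the Bloch ball is implemented by a unitary conjugation. Hence $\Phi=\mathcal U_1\circ\Phi_D\circ\mathcal U_2$, with $\Phi_D$ the Pauli channel with diagonal Bloch matrix $D$.

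Composing with unitary channels on either side preserves membership in $\EB$, since $\EB$ is a mapping cone. It also preserves the trace norm, because $\|A\|_1=\sum_i|s_i|$. So it suffices to prove the claim for $\Phi_D$.

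For $\Phi_D$ the normalised Choi matrix is Bell-diagonal:
\begin{equation*}
J=\tfrac14\Bigl(\openone\otimes\openone+\sum_i s_i\,\epsilon_i\,\sigma_i\otimes\sigma_i\Bigr),
\end{equation*}
where the $\epsilon_i=\pm1$ come from the transpose convention. Its eigenvalues are the Pauli weights $q_0,\dots,q_3$, each of the form $\tfrac14(1\pm s_1\pm s_2\pm s_3)$ with a product constraint on the signs. Partial transposition on $R$ flips the sign of $\sigma_2\otimes\sigma_2$ only, so $J^{T_R}$ is again Bell-diagonal. Its eigenvalues are obtained by changing the sign of $s_2$, and the smallest of them is $\tfrac14\bigl(1-\sum_i|s_i|\bigr)$.

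I would verify this last claim carefully. The sign pattern available after the flip is the complementary parity class, and the fact that one class contains $-\mathrm{sgn}(s_i)$ for all $i$ is exactly the standard statement that the tetrahedron of CPTP Pauli channels is reflected onto the octahedron $\sum|s_i|\le1$. Once the parity is checked, $J^{T_R}\ge0$ holds if and only if $\|A\|_1\le1$, and Eq.~\eqref{eq:indexformula} (Peres--Horodecki) gives the first claim.

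The consequence then follows at once. $\Phi^n$ is unital with Bloch matrix $A^n$, so applying the first claim to each power and taking the minimum in the definition of $\nEB$ yields $\nEB(\Phi)=\min\{n\ge1:\|A^n\|_1\le1\}$.

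The main obstacle is the sign bookkeeping, so I would settle it with a concrete check rather than general reasoning. Take $D=\diag(1,-1,1)$, which corresponds to $J=|\Phi^+\rangle\langle\Phi^+|$, and confirm that the minimal eigenvalue of $J^{T_R}$ equals $(1-3)/4=-1/2$. The reduction to $SO(3)$ rather than $O(3)$ also needs a remark: $\det O_i$ may be $-1$, but absorbing $-1$ into $D$ changes signs of the $s_i$ without changing $\sum|s_i|$.
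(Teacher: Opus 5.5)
Your reduction to a diagonal $D$ via $O_1,O_2\in SO(3)$ and your passage to powers are sound and match the paper. The gap is in the eigenvalue step.

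The spectrum of $J^{T_R}$ is $\tfrac14(1+b_1s_1+b_2s_2+b_3s_3)$ over $b\in\{\pm1\}^3$ with $b_1b_2b_3=-1$. So $\tfrac14(1-\sum_i|s_i|)$ belongs to it only if the pattern $b_i=-\mathrm{sgn}\,s_i$ has an odd number of minus signs, i.e.\ only if $s_1s_2s_3=\det A\ge0$. You cannot arrange this by absorbing signs: within $SO(3)$ only pairs of signs of $D$ can be flipped, so $\mathrm{sgn}\det D=\mathrm{sgn}\det A$ is fixed by the channel. For $\det A<0$ the number $\tfrac14(1-\|A\|_1)$ is an eigenvalue of $J$, not of $J^{T_R}$, and your formula is false. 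For example, $A=-\tfrac15\openone$ is a unital channel with $\lambda_{\min}[J^{T_R}]=\tfrac15$, not $\tfrac1{10}$.

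Your proposed sanity check is also mis-specified. $D=\diag(1,-1,1)$ is the transpose map ($\det D=-1$, not CP). Its $J$ is $\tfrac12\,\mathrm{SWAP}$, and its $J^{T_R}=\ket{\Phi^+}\bra{\Phi^+}$ has $\lambda_{\min}=0$. The channel whose Choi state is $\ket{\Phi^+}\bra{\Phi^+}$ is $D=\openone$. Done in the paper's convention, your check returns $0$ rather than $-\tfrac12$: it sits on exactly the parity where your formula breaks.

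The missing ingredient is complete positivity, which your argument never invokes. Without it the ``only if'' direction is false: the paper's example $(s_i)=(0.9,0.9,-0.9)$ satisfies all four partial-transpose inequalities while $\|A\|_1=2.7$.

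The repair is the paper's argument. For a channel, $J\ge0$ supplies the four even-parity inequalities and PPT supplies the four odd ones. Together they make all eight numbers $1\pm s_1\pm s_2\pm s_3$ nonnegative, which is exactly $\sum_i|s_i|\le1$ (the two tetrahedra meet in the octahedron). Conversely, $\|A\|_1\le1$ makes every odd-parity eigenvalue at least $\tfrac14(1-\|A\|_1)\ge0$.

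Equivalently, you can split on the sign of $\det A$:
\begin{itemize}
\item if $\det A\ge0$, your formula holds as stated;
\item if $\det A<0$, complete positivity alone forces $\|A\|_1\le1$, so every such unital channel is entanglement breaking.
\end{itemize}
Either way the CP constraint must appear explicitly in the proof.
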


\begin{proof}[Proof sketch]
Rotations of the input and output Bloch balls are unitary conjugations and
preserve EB, so $A$ may be taken diagonal, $A=\mathrm{diag}(\lambda_1,\lambda_2,\lambda_3)$
with $|\lambda_i|$ its singular values \cite{RSW02}. The four eigenvalues of
$J(\Phi)^{T_R}$ are $\tfrac14(1\pm\lambda_1\pm\lambda_2\pm\lambda_3)$ with an
\emph{odd} number of minus signs; an even number gives the spectrum of
$J(\Phi)$ itself, i.e. the complete-positivity condition. Each parity class
cuts out a tetrahedron in $(\lambda_1,\lambda_2,\lambda_3)$, and the two
tetrahedra intersect in the octahedron $\sum_i|\lambda_i|\le1$. Complete
positivity is therefore essential: $(\lambda_i)=(0.9,0.9,-0.9)$ satisfies all
four partial-transpose inequalities while $\|A\|_1=2.7$. Imposing both,
$\Phi$ is PPT iff $\|A\|_1\le1$, and at $d=2$ PPT is EB.
\end{proof}

Note $\|A^n\|_1\ne\sum_i s_i(A)^n$ unless $A$ is normal, the same
non-normality that separates $\|A\|_2$ from $\rho(A)$ below.

\begin{corollary}[Tightness]\label{cor:tight}
For isotropic unital channels $A=aR$ with $R\in SO(3)$,
Proposition~\ref{prop:unital} gives $n_{\mathrm{EB}}=\lceil\ln3/\ln(1/a)\rceil$
whereas Theorem~\ref{thm:qubit} gives $\lceil\ln(2\sqrt3)/\ln(1/a)\rceil$. The
two differ, before rounding, by the constant factor
\begin{equation}\label{eq:tightconst}
  \frac{\ln\bigl(2\sqrt3\bigr)}{\ln 3}\;=\;1.1309\ldots
\end{equation}
\end{corollary}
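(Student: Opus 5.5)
The plan is to evaluate both sides directly for $A=aR$, $c=0$, $0<a<1$, $R\in SO(3)$. Unitality gives $c=0$, so the fixed point is $v^\ast=(\openone-A)^{-1}c=0$ and $r_\ast=0$. It also gives $\|A\|_2=a$, since $R$ is orthogonal and all singular values of $A$ equal $a$. Inserting $r_\ast=0$ into Eq.~\eqref{eq:thm1} turns the argument of the logarithm into $2\sqrt{3}/1$. This yields the upper bound $\lceil\ln(2\sqrt3)/\ln(1/a)\rceil$ from Theorem~\ref{thm:qubit}, with nothing more to check.

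For the exact value I will use Proposition~\ref{prop:unital}. The key observation is that isotropic unital channels form a semigroup: $A^n=a^nR^n$, and $R^n\in SO(3)$, so $A^n$ stays normal with all three singular values equal to $a^n$. Hence $\|A^n\|_1=3a^n$, where $\|\cdot\|_1$ is the trace norm, i.e.\ the sum of singular values. This step needs care because of the paper's remark that $\|A^n\|_1\neq\sum_i s_i(A)^n$ in general. Here the identity holds because $A$ is a scalar times an orthogonal matrix, so $A^TA=a^2\openone$ and $(A^n)^TA^n=a^{2n}\openone$.

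With this, $\nEB=\min\{n\ge1:3a^n\le1\}=\min\{n: n\ge \ln3/\ln(1/a)\}$. That equals $\lceil\ln3/\ln(1/a)\rceil$, because $\ln 3/\ln(1/a)>0$ makes the ceiling at least $1$.

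The ratio of the two unrounded expressions is independent of $a$ and equals $\ln(2\sqrt3)/\ln3$. Numerically, $\ln(2\sqrt3)=\ln2+\tfrac12\ln3\approx1.24245$ and $\ln3\approx1.09861$, giving $1.1309\ldots$. Equivalently the ratio is $\tfrac12+\ln2/\ln3$.

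I do not expect a real obstacle. The only step needing justification beyond substitution is the trace-norm computation of $A^n$. A second point worth checking is that Proposition~\ref{prop:unital} may be applied to $\Phi^n$ itself. This holds because $\Phi^n$ is again a unital qubit channel, with Bloch pair $(A^n,0)$ by Lemma~\ref{lem:collapse}, and that is exactly how the proposition's formula for $\nEB$ is stated.
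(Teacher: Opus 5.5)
Your proposal is correct and follows the paper's route: the paper treats the corollary as a direct substitution, with $c=0$ giving $r_\ast=0$ and $\|A\|_2=a$ in Theorem~\ref{thm:qubit}, while $A^n=a^nR^n$ gives $\|A^n\|_1=3a^n$ in Proposition~\ref{prop:unital}, so the unrounded ratio is the $a$-independent constant $\ln(2\sqrt3)/\ln3=\tfrac12+\ln2/\ln3$. Your extra checks are exactly the points the paper leaves implicit: that $\Phi^n$ is again unital with Bloch pair $(A^n,0)$, and that the trace-norm identity holds here because $(A^n)^{\mathsf T}A^n=a^{2n}\openone$.
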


So, before rounding, Theorem~\ref{thm:qubit} overshoots the exact answer by
$13.1\%$; the integer ceiling can of course cost more when $n_{\mathrm{EB}}$
is small (at $a=1/3$ the exact index is $1$ and the bound is $2$).

\begin{remark}\label{rem:gbtight}
Theorem~\ref{thm:generald} does better still on this family. At $r_\ast=0$ one has $\rho^\ast=\openone/2$, $p=1/2$, and
$K_2=\sqrt3\cdot\sqrt3=3$ exactly, so Eq.~\eqref{eq:thm2} returns
$\lceil\ln3/\ln(1/a)\rceil$: \emph{the exact answer}. The Gurvits--Barnum
radius is saturated by isotropic states of two qubits, which is precisely why
the separable-ball route loses nothing here while the Weyl route loses the
factor in Eq.~\eqref{eq:tightconst}.
\end{remark}

\begin{corollary}[One clock, not two]\label{cor:clock}
As $a\to1^-$, $\ln(1/a)=(1-a)\bigl(1+O(1-a)\bigr)$, so
\begin{equation}
  n_{\mathrm{EB}}(\Phi)\;\le\;
  \frac{\ln\bigl(2\sqrt{r_\ast^2+3}/(1-r_\ast)\bigr)}{1-\|A\|_2}\,\bigl(1+o(1)\bigr).
\end{equation}
For a \emph{unital} channel, Proposition~\ref{prop:unital} gives
$n_{\mathrm{EB}}=\min\{n:\|A^n\|_1\le1\}$ exactly, so for normal $A$ with
$k_{\mathrm{eff}}=\#\{i:s_i=s_1\}$ the elementary bound $\|A^n\|_1\ge
k_{\mathrm{eff}}s_1^{\,n}$ gives
\begin{equation}
  n_{\mathrm{EB}}(\Phi)\;\ge\;\frac{\ln k_{\mathrm{eff}}}{\ln(1/s_1)}
  \;=\;\frac{\ln k_{\mathrm{eff}}}{1-s_1}\bigl(1+o(1)\bigr),\qquad s_1\to1^- .
\end{equation}
The number of rounds a loop remains entangling and the number of rounds it
remembers its input are therefore the same quantity, up to a factor depending
only on $r_\ast$ and on how many Bloch axes contract at comparable rates.
\end{corollary}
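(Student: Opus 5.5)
The corollary has two halves, and I would prove them separately. Both reduce to elementary estimates on results already in hand.

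\emph{Upper bound.} I would start from Eq.~\eqref{eq:thm1} of Theorem~\ref{thm:qubit}. Write $L:=\ln\bigl(2\sqrt{r_\ast^2+3}/(1-r_\ast)\bigr)$, which is independent of $n$ and finite because $r_\ast<1$. Put $x=1-a$. Since $\ln(1/a)=-\ln(1-x)=x+x^2/2+\dots$, we have $\ln(1/a)=x(1+O(x))$, and moreover $\ln(1/a)\ge x$ for all $x\in(0,1)$. Hence
\[
\lceil L/\ln(1/a)\rceil\le L/(1-a)+1=\frac{L}{1-a}\bigl(1+(1-a)/L\bigr).
\]
The additive $+1$ from the ceiling is absorbed into the $o(1)$ as $a\to1^-$, because $L>0$. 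Indeed $L\ge\ln(2\sqrt3)>0$, so the relative correction $(1-a)/L$ vanishes. No limit on $r_\ast$ is needed; the $o(1)$ is uniform in $r_\ast$ on compacts of $[0,1)$.

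\emph{Lower bound.} I would take $\Phi$ unital and $A$ normal. By Proposition~\ref{prop:unital}, $\nEB=\min\{n:\|A^n\|_1\le1\}$. For normal $A$ the singular values of $A^n$ are $s_i^n$. This holds because $A=U D U^{*}$ with $U$ unitary, so $A^n=UD^nU^{*}$, whose singular values are the moduli $|\mu_i|^n=s_i^n$. Therefore
\[
\|A^n\|_1=\sum_i s_i^n\ge k_{\mathrm{eff}}s_1^n .
\]
If $n<\ln k_{\mathrm{eff}}/\ln(1/s_1)$, then $k_{\mathrm{eff}}s_1^n>1$, so $\Phi^n\notin\EB$. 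It follows that $\nEB\ge\ln k_{\mathrm{eff}}/\ln(1/s_1)$. The asymptotic form then follows from the same expansion $\ln(1/s_1)=(1-s_1)(1+O(1-s_1))$.

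The main point needing care is the lower bound when $k_{\mathrm{eff}}=1$. There the bound reads $\nEB\ge0$, which is trivially true but empty. I would state this explicitly rather than hide it, and note that the bound is informative only for degenerate top singular values, for instance the isotropic family of Corollary~\ref{cor:tight}, where $k_{\mathrm{eff}}=3$ recovers $\ln3/\ln(1/a)$ exactly.

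The interpretive sentence, ``one clock'', I would present as a reading of the two displays rather than as a claim to prove. Both $\nEB$ bounds scale as $1/(1-s_1)$, with prefactors depending only on $r_\ast$ above and on $k_{\mathrm{eff}}$ below. The upper bound uses $a=\|A\|_2=s_1$, so the two displays involve the same clock.
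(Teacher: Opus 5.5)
Your proposal is correct and follows the paper's own argument, which is compressed into the statement itself. For the upper bound, insert $\ln(1/a)=(1-a)(1+O(1-a))$ into Theorem~\ref{thm:qubit}; for unital normal $A$, combine Proposition~\ref{prop:unital} with $\|A^n\|_1=\sum_i s_i^n\ge k_{\mathrm{eff}}s_1^n$. Two of your steps are details the paper leaves implicit rather than departures from it: you absorb the ceiling via the one-sided inequality $\ln(1/a)\ge 1-a$, which gives the non-asymptotic bound $\nEB\le L/(1-a)+1$ in your notation, and you note explicitly that the lower bound is vacuous when $k_{\mathrm{eff}}=1$.
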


\section{The certified index of the feedback loop}
\label{sec:certified}

\subsection{The value}

Evaluating Eq.~\eqref{eq:indexformula} at the reference point
\eqref{eq:refpoint} gives
\begin{align}
  \lambda_{\min}\bigl[J(\Phi)^{T_{R}}\bigr]   &= -0.2052788757\ldots \notag\\
  \lambda_{\min}\bigl[J(\Phi^{2})^{T_{R}}\bigr] &= -0.0451949891\ldots \label{eq:lamvalues}\\
  \lambda_{\min}\bigl[J(\Phi^{3})^{T_{R}}\bigr] &= +0.0370854674\ldots \notag
\end{align}
so that $\Phi$ and $\Phi^{2}$ are NPT and $\Phi^{3}$ is PPT. Hence
\begin{equation}\label{eq:N3}
  \boxed{\;\nEB(\Phi)\;=\;3\;}
\end{equation}
The loop conveys entanglement through two feedback rounds and is entanglement
breaking from the third onwards. Note that
Theorem~\ref{thm:qubit}, evaluated with $\|A\|_{2}=0.71331\ldots$ and
$r=0.58506\ldots$, returns the bound $\nEB\le7$; the direct evaluation
sharpens this to $3$.

\subsection{Certification at the reference point}

The values \eqref{eq:lamvalues} are certified in $256$-bit ball
arithmetic~\cite{Arb,pythonflint} following the workflow of
\cite[App.~B]{PaperI}, with one deliberate strengthening: no eigenvalue solver
appears in the certificate path. Negativity is established by a \emph{witness}: a vector
$w$ with $\bra{w}J^{T_{R}}\ket{w}<0$ certifies
$\lambda_{\min}\le\bra{w}J^{T_{R}}\ket{w}/\|w\|^{2}<0$ regardless of how $w$
was obtained, and positivity by \emph{Sylvester's criterion}, all four leading
principal minors of $J(\Phi^{3})^{T_{R}}$ being certified positive. Both are
pure arithmetic and therefore exact under ball arithmetic. As a cross-check the
same three quantities are also computed with Arb's certified eigenvalue solver,
and independently at $60$ decimal digits by direct composition of the channel,
agreeing to $25$ significant digits.

\subsection{Certification on the square}

Extending Eq.~\eqref{eq:N3} from a point to the square of half-width
$s=0.0013\pi$ requires more care than it appears to, and the obvious method
fails.

Evaluating the whole square in interval arithmetic, entering the four angles
as balls of radius $s$, does not work. Each angle occurs many times inside $U$
and $U^{\dagger}$, and ball arithmetic cannot know that those occurrences are
the same number; the resulting dependency inflates the enclosure of
$\lambda_{\min}[J(\Phi^{2})^{T_{R}}]$ to width $\approx0.13$, against a margin
of $0.045$. Uniform bisection restores tightness only at half-width
$\approx8\times10^{-5}\pi$, i.e. $16^{4}=65\,536$ boxes: brute force, not a
certificate. Nor does the generic perturbation constant
of~\cite{PaperI} suffice: $\|A-A_{0}\|_{2}\le12s$ is an $n=1$ bound, already
known to overestimate the true local variation by a factor of about five
\cite[Rem.~2]{PaperI}, and pushed through two powers it yields a drift of
$0.078$, larger than the margin it must beat.

What does work is the structured perturbation analysis called for in
\cite[Sec.~X]{PaperI}: a mean-value bound on the four parameter derivatives,
which is immune to the dependency problem because a derivative enclosure need
only be
crude: it is multiplied by $s$. Bounding
$\sum_{i}\sup\|\partial A/\partial p_{i}\|_{F}$ and
$\sum_{i}\sup\|\partial\bm{c}/\partial p_{i}\|$ over the square gives
\begin{equation}\label{eq:deltas}
  \|A-A_{0}\|_{F}\le\delta_{A}=0.0178,\qquad
  \|\bm{c}-\bm{c}_{0}\|\le\delta_{c}=0.0068 ,
\end{equation}
and these propagate through the exact identity of Lemma~\ref{lem:iso},
together with $\|A^{n}-A_{0}^{n}\|_{F}\le\sum_{k}\|A\|_{2}^{k}\delta_{A}
\|A_{0}\|_{2}^{\,n-1-k}$ and the corresponding bound for $\bm{c}_{n}$, to a
certified drift of $\lambda_{\min}$ across the square. Appendix~\ref{app:cert}
gives the details. The outcome is
\begin{equation}\label{eq:drifts}
  \begin{aligned}
    n=2:&\quad \text{drift}\le0.0153 \;<\; 0.045\le|\lambda_{\min}| ,\\
    n=3:&\quad \text{drift}\le0.0194 \;<\; 0.037\le \lambda_{\min} ,
  \end{aligned}
\end{equation}
i.e. certified Lipschitz constants $L_{2}\le3.739$ and $L_{3}\le4.742$ for
$\lambda_{\min}$ in the four angles. Weyl's inequality then gives $N=3$ at
every point of the square:
\begin{equation}
  \nEB(\Phi_{p})=3\qquad\text{for all } p\in\mathcal{B},
\end{equation}
i.e. on the whole box $\mathcal{B}$ of Eq.~\eqref{eq:box}, in which $\kappa$ and
$\beta$ vary as well, a strictly larger region than the $(\theta,\varphi)$
square of Proposition~4 of~\cite{PaperI}.

\begin{remark}
The certificate is conservative rather than tight. Substituting wider squares,
it continues to hold at $s=0.0013\pi$ and stops certifying at $0.004\pi$,
whereas direct evaluation shows $N=3$ persists to $\approx0.01\pi$ and first
breaks at $0.03\pi$, where $N\in\{2,3,4\}$. As with the four properties of
\cite[Rem.~2]{PaperI}, the certified square is a proof of principle, not a
phase boundary.
\end{remark}

\section{The index landscape}
\label{sec:landscape}

Figure~\ref{fig:landscape}(a) maps $N$ over the same $(\theta,\varphi)$ slice
at fixed $(\kappa_{0},\beta_{0})$ used for the four panels of Fig.~2
of~\cite{PaperI}, on an
$81\times81$ grid. The index is far from constant: it attains $93$ distinct
values, ranging from $1$ to $199$ among the points we resolve, with $38.8\%$
of the $6561$ grid points already entanglement breaking at $n=1$ and a ridge of long-lived
loops along small $\varphi$ and large $\theta$. The iteration was cut off at
$n=200$; $18$ of the $6561$ grid points reach that cutoff without becoming PPT.
They are shown in grey in panel~(a) and omitted from panel~(b) and from the
statistics below. The certified square sits in the $N=3$ contour. The $N=1$ region of panel~(a)
is the entanglement-breaking phase whose extent \cite[Sec.~X]{PaperI} leaves
open; we map it here but do not certify it, and certifying its boundary with
the same interval machinery remains open.

The structure is not accidental. Plotting the same data against the spectral
stability gap $1-\rho(A)$ (panel (b)) collapses it onto a single curve. Over
all $3997$ sampled channels with $1<N\le200$,
\begin{equation}\label{eq:oneclock}
  N\,\bigl[1-\rho(A)\bigr]\;\in\;[0.60,\,1.50],
  \qquad \text{median } 1.047,
\end{equation}
with $\ln3=1.0986$ sitting close to the median rather than at either edge.
Corollary~\ref{cor:clock} accounts for the scale and
Proposition~\ref{prop:unital} for the spread. In the unital case the index is
exactly $\min\{n:\|A^{n}\|_{1}\le1\}$, so the product is set by how many Bloch
axes contract at comparable rates: three comparable axes give $\ln3$, two give
$\ln2$, and one gives a product tending to zero. The data bear this out. The
ratio $s_{2}/s_{1}$ of the two largest singular values of $A$ correlates with
$N[1-\rho(A)]$ at $0.65$, and its median rises from $0.51$ among the points
below $\ln3$ to $0.84$ among those above; the corresponding soft count
$\sum_{i}(s_{i}/s_{1})^{N}$ correlates at $0.53$. The channels here are neither
unital nor normal, so Proposition~\ref{prop:unital} is a guide rather than a
law, but it is evidently the right guide.

The operational reading is the point of the section. The stability gap measures
how fast the loop forgets its input; $N$ measures how long it stays quantum.
Equation~\eqref{eq:oneclock} says these are not independent design parameters.
A loop cannot be made to forget quickly and remain entangling for many rounds:
buying a large stability margin is buying a short quantum lifetime, at a rate
set by $\ln3$ to within the factor of about $2.5$ spanned by
Eq.~\eqref{eq:oneclock}.

\begin{figure}[t]
  \includegraphics[width=\textwidth]{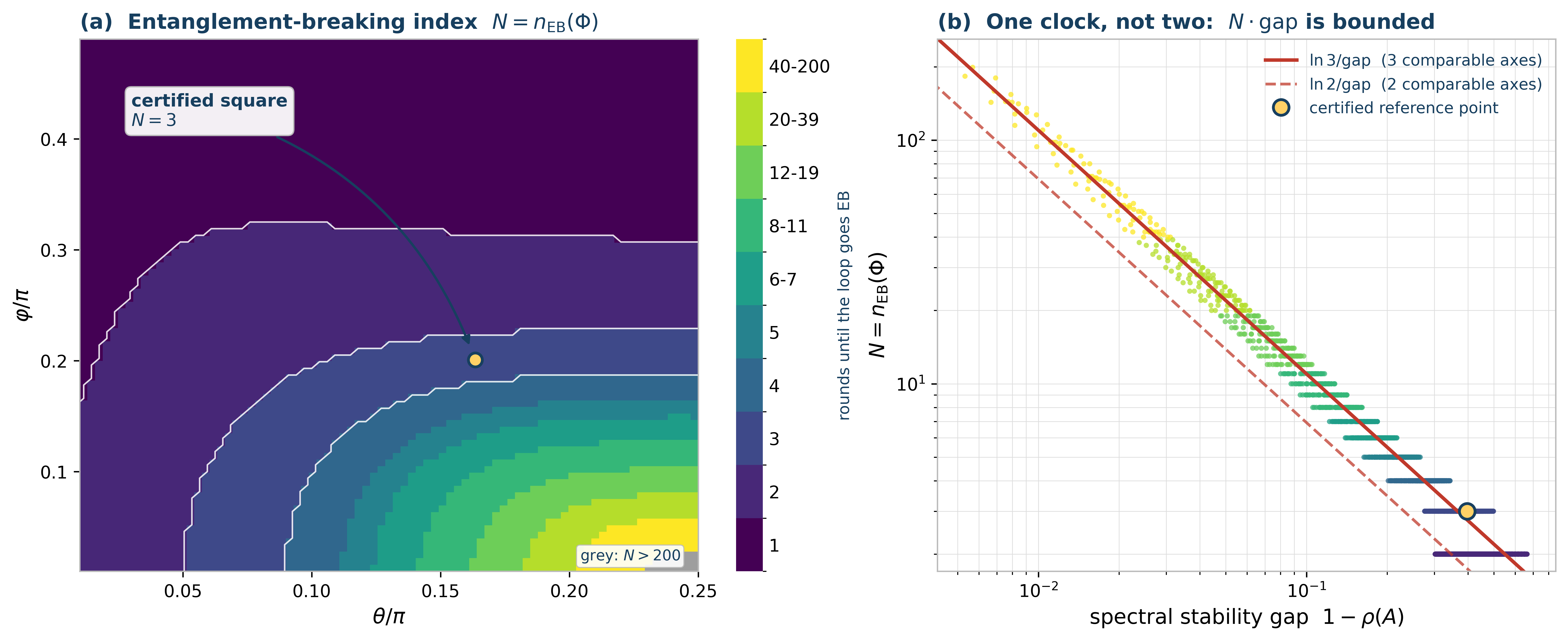}
  \caption{\label{fig:landscape}%
  (a) The entanglement-breaking index $N=\nEB(\Phi)$ over the
  $(\theta,\varphi)$ slice at fixed $(\kappa_{0},\beta_{0})$, computed from
  Eq.~\eqref{eq:indexformula}. White contours mark $N=1,2,3$; the marker is the
  certified reference point, where $N=3$ uniformly on the square of half-width
  $0.0013\pi$. (b) The same data against the spectral stability gap
  $1-\rho(A)$. The solid curve is the isotropic unital law of
  Proposition~\ref{prop:unital} in its $\rho\to1$ form, $\ln3/[1-\rho(A)]$, for
  three comparable Bloch axes; the dashed curve $\ln2/[1-\rho(A)]$ is the
  two-axis case. Grey cells in (a) did not become PPT within $200$ rounds. The product $N[1-\rho(A)]$ stays within
  $[0.60,1.50]$ over all $3997$ sampled channels with $1<N\le200$: the loop's
  memory time and
  its quantum lifetime are one clock, not two.}
\end{figure}

\section{The architecture floor}\label{sec:floor}

The central structural fact is that the infinite-temperature limit erases the bath entirely.

\begin{theorem}[Floor]\label{thm:floor}
For every traceless Hermitian involution $S$, the channel $\Phi_0$ is independent of $S$. Consequently $\nEB(\Phi_0)$, and indeed every functional of $\Phi_0$, is determined by the interaction unitary $U$ alone.
\end{theorem}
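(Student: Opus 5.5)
The plan is to evaluate the thermal family \eqref{eq:thermalloop} directly at $p=0$ and observe that the polarisation axis drops out before any property of $U$ enters. Setting $p=0$ in $\tau_S(p)=\tfrac12(\I+pS)$ gives $\tau_S(0)=\I/2$ for every traceless Hermitian involution $S$. Hence
\[
  \Phi_0(\rho)=\Tr_{FL}\bigl[\,U\,(\rho\otimes\tfrac{\I}{2}\otimes\tfrac{\I}{2})\,U^{\dagger}\bigr],
\]
and the right-hand side contains no $S$ at all. Equivalently, the expansion in the proof of Proposition~\ref{prop:quadratic} shows that the terms carrying $S$ are exactly the coefficients of $p$ and $p^{2}$. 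So $A(0)=A_0$ and $c(0)=c_0$ are built from $U$ and the maximally mixed ancilla alone. This already proves the first assertion: $\Phi_0$, as a map $M_2\to M_2$, is the same channel for every admissible $S$.

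The second assertion then follows formally. Any functional $F$ of a channel takes the same value on equal channels, so $F(\Phi_0)$ depends only on the data defining $\Phi_0$. That data is the unitary $U$ together with the fixed ancilla state $\I/4$ on $FL$. For $\nEB$ in particular, every power $\Phi_0^{n}$ and its Choi matrix $J(\Phi_0^{n})$ are determined by $U$. Therefore the criterion \eqref{eq:indexformula}, $\min\{n:\lambda_{\min}[J(\Phi_0^{n})^{T_R}]\ge0\}$, is a function of $U$ alone. Through \eqref{eq:U}, that means a function of $(a_0,a_1,\kappa,\varphi,\beta)$ only.

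There is no real obstacle here. The one point worth stating carefully is what ``independent of $S$'' means. The bath axis is not a hidden parameter of $U$, and the channel is independent of $S$ as an operator map, not merely up to a unitary conjugation. The hypotheses on $S$ (traceless, Hermitian, $S^{2}=\I$) matter only for $p\neq0$, where they make $\tau_S(p)$ a valid state. At $p=0$ they play no role, which is exactly why the limit erases the bath.
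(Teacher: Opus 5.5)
Your proposal is correct and follows the paper's own argument: setting $p=0$ gives $\tau_S(0)=\I/2$ for every $S$, so the ancilla is $\I_4/4$ and $\Phi_0(\rho)=\Tr_{FL}[U(\rho\otimes\I_4/4)U^{\dagger}]$ contains no reference to $S$. Your additional remarks, that the $S$-dependent terms are exactly the $p$ and $p^2$ coefficients in Proposition~\ref{prop:quadratic} and that $\nEB(\Phi_0)$ then follows from Eq.~\eqref{eq:indexformula}, are consistent elaborations of the same point.
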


\begin{proof}
$\tau_S(0) = \tfrac12(\I + 0\cdot S) = \I/2$ for every $S$. The ancilla state is therefore $\I_4/4$ irrespective of the polarisation axis, and $\Phi_0(\rho) = \Tr_{FL}[U(\rho \otimes \I_4/4)U^{\dagger}]$ contains no reference to $S$.
\end{proof}

The proof is immediate, but the content is not. It says that at infinite bath temperature the channel retains no memory of the bath's structure, only of the circuit. We accordingly define the \textbf{architecture floor}
\begin{equation}\label{eq:floordef}
  \floor(U) \;:=\; \nEB(\Phi_0),
\end{equation}
a functional of the interaction alone. Certification confirms the identity at exact arithmetic: across five bath directions, including two off the coordinate axes, the resulting Bloch data differ by at most $6.3\times10^{-76}$ with every entry containing zero; the enclosures are bit-identical, as an exact identity requires (\S\ref{sec:numerics}).

The floor is not a technicality of the construction; it is frequently the whole story: for a large fraction of circuits the loop destroys entanglement in a single round even at infinite bath temperature, decided entirely by the interaction, before temperature is mentioned. Theorem~\ref{thm:normalform} makes that fraction computable.

\subsection{The infinite-temperature channel in closed form}\label{sec:normalform}

Theorem~\ref{thm:floor} says $\Phi_0$ depends only on $U$. It says nothing about \emph{how}. In fact $\Phi_0$ collapses to a three-parameter normal form, and one of the four circuit angles disappears entirely.

\begin{theorem}[Infinite-temperature normal form]\label{thm:normalform}
Let $U$ be as in \eqref{eq:U}, with independent branch angles $a_0,a_1$. Then
\[
  \Phi_0 \;=\; \Lambda_{\cos^2\varphi}\;\circ\;\mathrm{Ad}_{R_y(\beta)}\;\circ\;\Delta_\delta,
  \qquad \delta \;=\; \cos\!\big(\tfrac{a_0-a_1}{2}\big),
\]
where $\Delta_\delta$ is the dephasing channel with Bloch matrix $\diag(\delta,\delta,1)$ and $\Lambda_a$ is the isotropic contraction $r\mapsto ar$. Equivalently, in the Bloch basis $(x,y,z)$,
\[
  A_0 \;=\; \cos^2\!\varphi\;O_y(\beta)\,\diag(\delta,\delta,1), \qquad c_0 = 0 .
\]
In particular $\Phi_0$ is unital, and the coupling $\kappa$ does not appear. For the supplementary construction $\delta = \sin 2\theta$.
\end{theorem}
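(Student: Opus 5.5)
The plan is to compute $\Phi_0(\rho)=\Tr_{FL}[U(\rho\otimes\I_4/4)U^\dagger]$ directly, one factor of $U=(R_y(\beta)_M\otimes\I)\,U_f\,U_w\,U_W$ at a time. The aim is to show that after each stage the state on the registers still to be acted on keeps a simple product form. Two facts drive the argument. First, every unitary after $U_W$ acts trivially on $F$ or is block-diagonal in the $F$ computational basis, so $F$ can be traced out early. Second, $L$ stays maximally mixed and uncorrelated with $M$ up to the partial swap, because a maximally mixed ancilla is invariant under every unitary.

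\emph{Step 1 ($U_W$, dephasing).} Write $\rho=\sum_{ij}\rho_{ij}\ket{i}\bra{j}_M$. After $U_W$ the state is
\[
\sum_{ij}\rho_{ij}\ket{i}\bra{j}\otimes X_{ij}\otimes\I_L/2,\qquad X_{ij}=\tfrac12 R_y(a_i)R_y(a_j)^\dagger .
\]
Its $F$-trace is $\Tr X_{ij}=\tfrac12\Tr R_y(a_i-a_j)=\cos\frac{a_i-a_j}{2}$. This factor equals $1$ on the diagonal and $\delta$ off it, which is precisely $\Delta_\delta$.

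\emph{Step 2 ($U_w$ drops out).} Write $U_w=\sum_f\ket{f}\bra{f}_F\otimes\I_M\otimes V_f$ with $V_f=\exp(\mp i\tfrac\kappa2 Y_L)$. This operator acts trivially on $M$, and the later factors $U_f$ and $R_y(\beta)$ act trivially on $F$. Hence the partial trace over $F$ may be taken immediately after $U_w$. Only the $f=f'$ blocks survive that trace, and on them the $L$ factor becomes $V_f(\I/2)V_f^\dagger=\I/2$. The reduced state on $ML$ is therefore $\Delta_\delta(\rho)\otimes\I_L/2$, with no dependence on $\kappa$. This gives the claim that the coupling disappears.

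\emph{Step 3 ($U_f$, isotropic contraction).} For a state $\sigma\otimes\I/2$ on $ML$, expand
\[
U_f(\sigma\otimes\I/2)U_f^\dagger=\cos^2\!\varphi\,\sigma\otimes\I/2+\sin^2\!\varphi\,\tfrac{\I}{2}\otimes\sigma+i\sin\varphi\cos\varphi\,[\sigma\otimes\I/2,\mathrm{SWAP}].
\]
Tracing out $L$ gives $\cos^2\!\varphi\,\sigma+\sin^2\!\varphi\,\I/2$ from the first two terms. For the commutator, use $\Tr_L[\mathrm{SWAP}(X\otimes Y)]=YX$ and $\Tr_L[(X\otimes Y)\mathrm{SWAP}]=XY$. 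With $Y=\I/2$ both equal $\sigma/2$, so the cross term vanishes. The map is thus the depolarizing map $\Lambda_{\cos^2\varphi}$ with $c=0$.

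\emph{Step 4 (assembly).} Finally apply $\mathrm{Ad}_{R_y(\beta)}$. In Bloch form this is $O_y(\beta)$, and it commutes with the isotropic $\Lambda$, so the composition order in the statement is immaterial. The Bloch data are
\[
A_0=\cos^2\!\varphi\,O_y(\beta)\diag(\delta,\delta,1),\qquad c_0=0 ,
\]
so $\Phi_0$ is unital. In the supplementary case $a_0-a_1=\pi-4\theta$, which gives $\delta=\cos(\pi/2-2\theta)=\sin2\theta$.

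The main obstacle I expect is the bookkeeping in Steps 2 and 3: justifying the early partial trace over $F$ (the commutation of $U_f$ and $R_y(\beta)$ past $\Tr_F$), and checking carefully that the partial-swap cross term vanishes exactly. It vanishes only because $L$ is maximally mixed. At $p\neq0$ this fails, which is why the statement holds only at infinite temperature.
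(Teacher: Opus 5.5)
Your proof is correct and is essentially the paper's argument: you pull out $\mathrm{Ad}_{R_y(\beta)}$, expand the partial swap into the same three terms, and use the same key fact, namely the paper's identity $\Tr_F[W(B\otimes\I_L)W^\dagger]=\Tr[B]\,\I_L$ with $W$ the $FL$ part of $U_w$, which you read off from the block-diagonal form of $U_w$ in the $F$ basis rather than by expanding $W$. The only real difference is that you take $\Tr_F$ before applying $U_f$ (legitimate, since $U_f$ and $R_y(\beta)$ act trivially on $F$), so the partial swap acts on the product $\Delta_\delta(\rho)\otimes\I_L/2$ and the paper's swapped and cross terms reduce to $\Tr_L[\tfrac{\I}{2}\otimes\sigma]=\tfrac{\I}{2}$ and $[\sigma,\tfrac{\I}{2}]=0$, a slightly cleaner bookkeeping of the same computation.
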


\begin{proof}
Write $c = \cos\varphi$, $s = \sin\varphi$, $S = \mathrm{SWAP}_{ML}$, $V_m = R_y(a_m)$, and $W = \exp(-i\tfrac\kappa2 Z\otimes Y)$ on $FL$, so $U_w = \I_M\otimes W$.

$R_y(\beta)_M$ commutes with $\Tr_{FL}$, so $\Phi_0 = \mathrm{Ad}_{R_y(\beta)}\circ\tilde\Phi_0$ with $\tilde\Phi_0$ built from $U' = U_fU_wU_W$; on Bloch data the conjugation is $O_y(\beta)$. Put $X := U_wU_W(\rho\otimes\tfrac{\I_4}{4})U_W^\dagger U_w^\dagger$. Since $U_f = c\I - isS$,
\begin{equation}\label{eq:split}
  U'(\rho\otimes\tfrac{\I_4}{4})U'^\dagger \;=\; c^2X + s^2SXS + ics\,[X,S].
\end{equation}

\emph{(i) The first term is a dephasing channel, and $\kappa$ cancels.} The partial trace over $FL$ is invariant under unitaries supported on $FL$, so the $W$'s drop out. With
$U_W(\rho\otimes\tfrac{\I_4}{4})U_W^\dagger = \tfrac14\sum_{m,m'}\rho_{mm'}|m\rangle\!\langle m'|\otimes V_mV_{m'}^\dagger\otimes\I_L$,
tracing out $F$ and $L$ multiplies the $(m,m')$ entry by $\tfrac12\Tr[V_mV_{m'}^\dagger]$, which is $1$ on the diagonal and
\[
  \tfrac12\Tr[V_0V_1^\dagger] \;=\; \tfrac12\Tr\big[R_y(a_0-a_1)\big] \;=\; \cos\!\big(\tfrac{a_0-a_1}{2}\big) \;=\; \delta
\]
off it. Hence $\Tr_{FL}[X] = \Delta_\delta(\rho)$.

\emph{(ii) The swapped term is constant.} $\Tr_{FL}[SXS]$ is the $L$-marginal of $X$ transplanted onto $M$. Since $\Tr_M[U_W(\rho\otimes\tfrac{\I_4}{4})U_W^\dagger] = \tfrac14\sum_m\rho_{mm}V_mV_m^\dagger\otimes\I_L = \tfrac14\I_F\otimes\I_L$, conjugating by the unitary $W$ and tracing out $F$ gives $\Tr_{FL}[SXS] = \tfrac12\I$, independent of $\rho$: this term contributes to the affine part only.

\emph{(iii) The cross term vanishes identically.} For $B$ on $F$ put $\mathcal G(B) := W(B\otimes\I_L)W^\dagger$. Using $(Z\otimes Y)^2 = \I$,
\[
  \mathcal G(B) = \cos^2\!\tfrac\kappa2\,B\otimes\I + \sin^2\!\tfrac\kappa2\,ZBZ\otimes\I + i\cos\tfrac\kappa2\sin\tfrac\kappa2\,(BZ-ZB)\otimes Y,
\]
and since $\Tr[ZBZ] = \Tr[B]$ while $\Tr[BZ-ZB] = 0$,
\begin{equation}\label{eq:Gtrace}
  \Tr_F\big[\mathcal G(B)\big] = \Tr[B]\,\I_L \qquad\text{for every }B.
\end{equation}
Writing $X = \tfrac14\sum_{m,m'}\rho_{mm'}|m\rangle\!\langle m'|\otimes\mathcal G(V_mV_{m'}^\dagger)$ and using $S|j,f,l\rangle = |l,f,j\rangle$, identity \eqref{eq:Gtrace} gives
\[
  \big(\Tr_{FL}[XS]\big)_{ij} = \tfrac14\rho_{ij}\Tr[V_iV_j^\dagger] = \big(\Tr_{FL}[SX]\big)_{ij},
\]
so $\Tr_{FL}[[X,S]] = 0$ identically.

Applying $\Tr_{FL}$ to \eqref{eq:split} therefore gives $\tilde\Phi_0(\rho) = c^2\Delta_\delta(\rho) + s^2\tfrac{\I}{2}$, which is unital, so $\tilde c_0 = 0$ and $\tilde A_0 = c^2\diag(\delta,\delta,1)$.
\end{proof}

$\kappa$ enters \eqref{eq:split} only through $W$, and $W$ cancels in (i), contributes a constant in (ii), and is annihilated by \eqref{eq:Gtrace} in (iii). \textbf{At infinite bath temperature the $F$--$L$ coupling is invisible.} This is the $p=0$ face of the same structure that Lemma~\ref{lem:structuralzero} exhibits at all $p$: there the supplementarity of the branch angles forces $A_{yz}=0$; here the whole $y$ row and column decouple and $\kappa$ vanishes with them.

\subsection{Three consequences}\label{sec:consequences}

\paragraph{(a) The near-unitarity parameter is exact.} The singular values of $A_0$ are
\[
  \big\{\,\cos^2\varphi,\;\; \cos^2\varphi\,|\delta|,\;\; \cos^2\varphi\,|\delta|\,\big\},
\]
one simple and one doubly degenerate, so $\opn{A_0} = \cos^2\varphi$ and
\begin{equation}\label{eq:epsilon}
  \varepsilon \;:=\; 1 - \opn{A_0} \;=\; \sin^2\!\varphi
\end{equation}
exactly, for every $\theta,\kappa,\beta$. The scan of \S\ref{sec:valley} bins on $\varepsilon$; by this identity those bins are exact slices of parameter space rather than measured quantities.

\paragraph{(b) A closed form for the floor.} Both $\mathrm{span}\{e_y\}$ and the $(x,z)$ plane are invariant under $A_0$, since $O_y(\beta)$ fixes $e_y$ and $\diag(\delta,\delta,1)$ is diagonal. Writing $a = \cos^2\varphi$ and $M = a\,\Rot(\beta)\diag(\delta,1)$ for the action on that plane,
\begin{equation}\label{eq:nucAn}
  A_0^n = (a\delta)^n \oplus M^n, \qquad
  \nuc{A_0^n} = (a|\delta|)^n + \sqrt{\Frob{M^n}^2 + 2|\det M^n|},
\end{equation}
the second equality because a real $2\times2$ matrix has $\sigma_1\sigma_2 = |\det|$ and $\sigma_1^2+\sigma_2^2 = \Frob{\cdot}^2$. Cayley--Hamilton gives $M^n = p_nM + q_n\I$ with
\[
  \mu_\pm = \tfrac12\Big[\Tr M \pm \sqrt{(\Tr M)^2 - 4\det M}\Big],\quad
  p_n = \frac{\mu_+^n-\mu_-^n}{\mu_+-\mu_-},\quad q_n = -\det M\cdot p_{n-1},
\]
whence $\Frob{M^n}^2 = p_n^2\,\Frob{M}^2 + 2p_nq_n\Tr M + 2q_n^2$, with $\Tr M = a(1+\delta)\cos\beta$, $\det M = a^2\delta$ and $\Frob{M}^2 = a^2(1+\delta^2)$. Since $\Phi_0^n$ is unital for every $n$, the nuclear-norm characterisation of entanglement breaking for unital qubit channels \cite{Ruskai2003} (see also \cite{EBnorm}) applies at every step and
\begin{equation}\label{eq:floorclosed}
  \floor(U) \;=\; \min\{\,n\ge1 : \nuc{A_0^n}\le 1\,\}
\end{equation}
is a closed-form arithmetic condition in $(\varphi,\delta,\beta)$: no matrix powers of $A_0$, no singular-value decomposition, no Choi construction. It reproduces every one of the $7{,}056$ floors computed independently by Choi-matrix bisection in the scan of \S\ref{sec:valley}, and both counterexample floors of \S\ref{sec:valley} are certified directly from it in ball arithmetic (\S\ref{sec:numerics}).

At $n=1$ it reduces to $\nuc{A_0} = \cos^2\!\varphi\,(1+2|\delta|)$, so
\begin{equation}\label{eq:floorone}
  \floor = 1 \iff \cos^2\!\varphi\,(1+2|\sin2\theta|)\le1,
\end{equation}
independent of $\kappa$ \emph{and} of $\beta$.

\paragraph{(c) The floor distribution is a measure, not a sample.} Drawing $\theta,\varphi,\kappa,\beta$ independently and uniformly on $(0.02\pi,\,0.48\pi)$, with the support trimmed away from the $\varphi \to 0$ edge where the floor diverges, consequence (b) turns the floor-$=1$ fraction into a one-dimensional integral. On that support $\sin2\theta>0$ and the threshold $\varphi^*(\theta) = \arccos\big((1+2\sin2\theta)^{-1/2}\big)$ lies strictly inside the $\varphi$ range (both facts certified in \S\ref{sec:numerics}), so
\begin{equation}\label{eq:measure}
  \Pr\big[\floor = 1\big] \;=\; \frac{1}{\Delta^2}\int_{0.02\pi}^{0.48\pi}\!\big(0.48\pi - \varphi^*(\theta)\big)\,d\theta
  \;=\; 0.46427058204502844932,
\end{equation}
with $\Delta = 0.46\pi$, certified to a radius of $1.5\times10^{-21}$. The remainder of the distribution depends on $\beta$ as well and is evaluated on the same closed form over $10^7$ draws:

\begin{center}\footnotesize
\begin{tabular}{lcccccccccc}
\toprule
floor & 1 & 2 & 3 & 4 & 5 & 6--9 & 10--19 & 20--49 & 50--99 & $\ge100$\\
\midrule
prob. & \textbf{0.46427} & 0.23414 & 0.09993 & 0.05211 & 0.03169 & 0.05619 & 0.03612 & 0.01836 & 0.00497 & 0.00212\\
\bottomrule
\end{tabular}
\end{center}

\noindent Sampling standard errors are at most $1.6\times10^{-4}$, and the floor-$=1$ entry is the certified value \eqref{eq:measure} rather than its Monte Carlo estimate. The distribution has mean $3.669$ and median $2$, and a hard maximum of $279$ on this support, since $\nuc{A_0^n}\le3\cos^{2n}\!\varphi$ forces $n\le\ln3/(-\ln\cos^2 0.02\pi)$.

\section{The floor is not the minimum}\label{sec:valley}

It is natural to expect that raising the bath temperature can only shorten the loop's quantum lifetime, so that $\Phi_0$ minimises the index over the thermal family. \textbf{This is false.}

\paragraph{Counterexample (certified).} At the circuit
\[
  (\theta,\varphi,\kappa,\beta)/\pi = (0.263681,\; 0.008960,\; 0.377631,\; 0.004911), \qquad \opn{A_0} = 0.999208,
\]
the infinite-temperature index is $\nEB(\Phi_0) = 352$, while $\nEB(\Phi_p) = 350,\ 347,\ 351$ at $p = 0.25,\ 0.50,\ 0.75$ respectively, i.e.\ a bath term (\S\ref{sec:bathterm}) of $-2,\ -5,\ -1$. A second circuit, $(\theta,\varphi,\kappa,\beta)/\pi = (0.219750, 0.022784, 0.033098, 0.024026)$ with $\opn{A_0} = 0.994885$, gives floor $67$ and $\nEB(\Phi_{0.25}) = 66$.

Every index above is certified in Arb ball arithmetic at 256-bit working precision, each on both boundaries: NPT by explicit witness at $n-1$, PPT by Sylvester's criterion at $n$ (\S\ref{sec:numerics}). These are not floating-point artefacts.

\textbf{A polarised bath can therefore break entanglement strictly sooner than a maximally mixed one.} The floor of \S\ref{sec:floor} is a functional of the interaction and remains well defined, but it is not a lower bound on the family. We call the phenomenon a \textbf{valley}, and the rest of this section characterises it quantitatively.

\subsection{How the valley was measured}

Because the effect is rare, its characterisation is statistical, and we label it as such throughout: the \emph{existence} claim rests on the certified counterexamples above, never on the sampling below.

Circuits were drawn with $\theta \sim \pi\big(\tfrac14 + \mathcal N(0,\,0.05^2)\big)$, $\kappa,\beta \sim \mathcal U(0,\tfrac\pi2)$, and $\varphi$ fixed within each bin by $\varepsilon = \sin^2\varphi$, an identity, not an approximation, by Theorem~\ref{thm:normalform}, so the bins are exact slices of the parameter space. A circuit is recorded as a valley if $\Phi_p^{\,\mathrm{floor}-1}$ is PPT for at least one $p$ on a 16-point grid; since entanglement breaking is closed under further composition, this single Choi evaluation per $p$ decides valley membership without an index search. The full index is then computed only for the circuits that pass this screen: at most $0.85\%$ of a bin, the largest valley rate observed.

Three independent runs contribute $4{,}114{,}799$ trials and $7{,}056$ valleys. The two runs sharing an $\varepsilon$ value at $3.0\times10^{-3}$ agree: $2{,}295/672{,}800 = 0.3411\%$ against $42/13{,}200 = 0.3182\%$, $z = +0.45$ (two-sided $P = 0.65$; capital $P$ denotes a $P$-value throughout, lower-case $p$ being the bath polarisation), across different sampling schemas and a 51-fold difference in exposure.

\subsection{\texorpdfstring{A Gaussian cutoff in $\varepsilon$, with a measured \emph{marginal} exponent}{A Gaussian cutoff in epsilon, with a measured marginal exponent}}\label{sec:cutoff}

The valley rate is flat deep in the near-unitary corner and falls off sharply as $\varepsilon$ grows. Over the five bins with $\varepsilon \le 1.5\times10^{-3}$ the rate is constant at
\[
  \text{plateau} \;=\; 0.7391\%\ \ [0.6744\%,\, 0.8083\%], \qquad \chi^2 = 3.56 \text{ on } 4 \text{ df},
\]
and by $\varepsilon = 8\times10^{-3}$ it has fallen by a factor of $383$ $[224, 718]$.

Fitting the binomial likelihood over all $11$ bins, with the runs pooled per $\varepsilon$, discriminates the candidate shapes decisively. Writing $\text{rate}(\varepsilon) = \mathcal{A}\exp[-(\varepsilon/\varepsilon_0)^\eta]$ and profiling in the exponent:
\begin{equation}\label{eq:beta}
  \eta \;=\; 2.02, \qquad 95\% \text{ CI } [1.90,\, 2.16].
\end{equation}
The interval contains $2$ and excludes $1$. A likelihood-ratio test of the free exponent against $\eta = 2$ gives $D = 0.126$ on 1 df ($P = 0.72$), no evidence whatever for a departure from quadratic, while against $\eta = 1$ it gives $D = 373.6$. A power law is worse than the Gaussian by $4957$ AIC units and a constant rate by $12219$; neither is tenable.

The fitted Gaussian is $\mathcal{A} = 0.7898\%$, $\varepsilon_0 = 3.290\times10^{-3}$, with half-maximum at $\varepsilon = 2.739\times10^{-3}$, the amplitude agreeing with the independently measured plateau to within its interval.

\textbf{This exponent is a marginal, and it should not be read as a property of the mechanism.} The scan's $\varepsilon$ parametrises the circuit \emph{ensemble}, not merely a coupling: the floor distribution shifts with it, from median floor $69$ at $\varepsilon = 2\times10^{-3}$ to median $32$ at $\varepsilon = 2\times10^{-2}$. So
\[
  \text{rate}(\varepsilon) \;=\; \sum_F \Pr[\floor = F \mid \varepsilon]\;\text{rate}(\varepsilon \mid \floor = F),
\]
and \eqref{eq:beta} measures the product. Resolving the same scan by floor (the tally records one) gives conditional exponents that are not $2$ and are not even close to each other:

\begin{center}\footnotesize
\begin{tabular}{@{}lrcl@{}}
\toprule
floor band & valleys & $\eta$ & 95\% CI\\
\midrule
$2$--$20$ (with \S\ref{sec:fartail}) & $33$ & $0.80$ & $[0.26,\,1.58]$\\
$33$--$64$ & $1{,}606$ & $2.70$ & $[2.35,\,3.10]$\\
$65$--$128$ & $2{,}892$ & $2.95$ & $[2.55,\,3.45]$\\
$\ge 129$ & $1{,}898$ & $3.60$ & $[3.00,\,4.40]$\\
\midrule
all floors, same bins & $6{,}538$ & $1.75$ & $[1.60,\,1.85]$\\
\bottomrule
\end{tabular}
\end{center}

\noindent The conditional exponent rises monotonically with the architecture floor, and \textbf{no band has an exponent of $2$}: each of the three deep bands excludes it, at $D = 13.6$, $26.4$ and $29.1$ on 1 df. The all-floors marginal on these six bins, $1.75$, sits inside the spread because it is an average over it, weighted by a floor law that is itself a function of $\varepsilon$. It differs from the $2.02$ of \eqref{eq:beta} because \eqref{eq:beta} additionally includes the first run's five plateau bins, which pin the amplitude; that the two intervals, $[1.60, 1.85]$ and $[1.90, 2.16]$, are \emph{disjoint} is the mixture speaking once more: no single law of this form describes the plateau bins and the tail bins simultaneously.

We therefore do \emph{not} claim a structural echo between the measured $2.02$ and the $O(p^2)$ behaviour forced by the structural zero $\Tr[K_nA_1] = 0$ of \S\ref{sec:obstruction}. Such an argument would need the conditional exponent, and the conditional exponent is not $2$. What \eqref{eq:beta} does establish is the law obeyed by a circuit drawn from this ensemble at a given $\varepsilon$, which is the operationally relevant statement and the one an experiment would meet; a mechanism must reproduce the ladder, not the average.

All rows above are statistics in the sense of \S\ref{sec:scanprov}. Their log-likelihoods were re-evaluated at 30 decimal digits, so none of the exclusions is an artefact of double precision (\S\ref{sec:scan3cert}).

\subsection{The far tail is a shallow-floor population with its own exponent}\label{sec:fartail}

One valley occurred at $\varepsilon = 2\times10^{-2}$, where the Gaussian above predicts $4.75\times10^{-13}$ events. Adding a single constant term repairs every bin simultaneously,
\begin{equation}\label{eq:tail}
  \text{rate}(\varepsilon) \;=\; \mathcal{A}\,e^{-(\varepsilon/\varepsilon_0)^2} \;+\; B, \qquad B = 7\times10^{-7}\ \ [4.1\times10^{-8},\, 3\times10^{-6}],
\end{equation}
improving the likelihood by $D = 38.76$ on 1 df and reducing the largest standardised residual across all $11$ bins to $1.37$, while moving the Gaussian component imperceptibly ($\mathcal{A} = 0.7902\%$, $\varepsilon_0 = 3.288\times10^{-3}$). $B = 0$ is excluded at $\Delta\log L = 19.4$.

The tail event's floor is $8$, the $3.8$th percentile of the floor distribution in its own bin, against 41st to 60th for the bulk bins. That percentile falls monotonically with $\varepsilon$ ($59.8$, $57.9$, $52.5$, $41.4$ at $\varepsilon = 2, 3, 5, 8 \times10^{-3}$), and $3.8$ is close to where that trend extrapolates, so the event is the end of a drift toward shallow floors rather than a separate population. \textbf{The drift is the phenomenon.} We resolved it directly.

\paragraph{A stratified scan, pre-registered.} Because Theorem~\ref{thm:normalform} makes the floor a $2\times2$ binary exponentiation, circuits can be screened on floor before any Choi matrix is built. A second scan drew $12{,}000{,}000$ circuits at $\varepsilon \in \{1.2, 2, 3.2, 5, 8, 12.5\}\times10^{-2}$, up to six times the largest $\varepsilon$ reached above, keeping the $7{,}362{,}197$ with $\floor \in [2,20]$, and found $23$ valleys. Two hypotheses were registered before it ran:
\begin{itemize}\itemsep2pt
\item $H_{\mathrm{gauss}}$: the conditional rate follows \eqref{eq:beta}, predicting $0.0064$ valleys in total;
\item $H_{\mathrm{flat}}$: the conditional rate is constant at the value this stratum shows above, $10/908064 = 1.101\times10^{-5}$, predicting $81.1$.
\end{itemize}
Both are refuted. Against $H_{\mathrm{gauss}}$, $P[X \ge 23] = 1.4\times10^{-73}$, and inflating its mean a thousandfold leaves $P = 3\times10^{-7}$. Against $H_{\mathrm{flat}}$, the exact binomial $P[X \le 23]$ is $2.650\times10^{-14}$, two-sided $5.300\times10^{-14}$. \textbf{A valley occurs at $\varepsilon = 0.125$}, floor $3$, depth $1$, where \eqref{eq:beta} predicts a rate below $10^{-300}$.

\paragraph{The stratum's law.} Pooling with the bins above (they agree where they overlap, Fisher $P = 1.00$ and $0.12$) gives $33$ valleys in $8{,}270{,}261$ screened circuits across a $62$-fold range in $\varepsilon$. Fitting $\mathcal{A}\exp[-(\varepsilon/\varepsilon_0)^\eta]$:
\begin{equation}\label{eq:stratumeta}
  \eta \;=\; 0.80, \qquad 95\% \text{ CI } [0.26,\, 1.58],
\end{equation}
consistent with $\eta = 1$ ($D = 0.36$, $P = 0.55$) and \textbf{excluding $\eta = 2$} ($D = 7.16$, $P = 0.0075$). An exponential is preferred over a Gaussian by $6.80$ AIC units, over a power law by $8.15$, and over a constant by $39.18$. The shallow stratum's characteristic scale, $\varepsilon_0 = 2.6\times10^{-2}$, is eight times that of \eqref{eq:beta}.

\paragraph{The decline is intrinsic, not composition drift.} Within $[2,20]$ the floor composition also moves with $\varepsilon$, so the same marginalisation could hide one level down. It does not. Weighting the pooled sub-band rates by each bin's own certified composition predicts a nearly flat rate; the observed rate spans a factor of $43$, with $\chi^2 = 74.7$ on 9 df ($P = 1.8\times10^{-12}$) and an observed-to-predicted ratio falling monotonically from $6.5$ to $0.08$. At $\varepsilon = 0.125$ composition alone predicts $11.9$ valleys and one was seen. The suppression is therefore real at fixed floor, just far gentler than \eqref{eq:beta}.

\paragraph{What $B$ is.} The stratum law at $\varepsilon = 2\times10^{-2}$ gives $5.2\times10^{-6}$, and the stratum carries $29.7\%$ of circuits there, so its marginal contribution is $1.6\times10^{-6}$, consistent with the fitted $B$. \textbf{$B$ is the shallow-floor population showing through once the deep-floor population has been extinguished.} It is not, however, a constant: it decays with $\varepsilon_0 \approx 2.6\times10^{-2}$, and looks flat across the bins of \eqref{eq:tail} only because that window is narrow compared with its own scale. Between $\varepsilon = 5\times10^{-3}$ and $0.125$ the fitted stratum law falls by a factor of $99$, and the measured rates by $43$.

The consequence for the paper's thesis is a strengthening. \textbf{For shallow architectures the floor fails to bound the family with no useful $\varepsilon$ window at all}: valleys survive to $\varepsilon = 0.125$, where the circuit is nowhere near unitary. The Gaussian window of \eqref{eq:beta} is a property of the deep, near-unitary corner alone.

\subsection{Depth is unit-quantised and sub-linear in the floor}

Define the depth of a valley as $\mathrm{floor} - \min_p \nEB(\Phi_p)$. Over the $6{,}538$ valleys from the two runs whose schema records depth (the first run predates that column; \S\ref{sec:scanprov}), it takes only small integer values: $83.82\%$ have depth $1$, $12.25\%$ depth $2$, $3.79\%$ depth $3$, and $0.14\%$ depth $4$. The largest \emph{relative} depths are all unit dips at small floors ($1/6$, $1/7$, $1/8$, $1/8$, $1/9$, $1/11$, $1/11$, $1/14$), so the relative scale is set by the denominator, not by unusually deep excursions.

Two natural mechanisms are both refuted. Fitting a Poisson regression $\mathbb E[\text{depth}] = C\,\mathrm{floor}^{\,b}$ gives
\begin{equation}\label{eq:depthglm}
  b \;=\; 0.239, \qquad 95\% \text{ CI } [0.200,\, 0.275],
\end{equation}
excluding $b = 0$ (a constant quantum, depth independent of the floor) and $b = 1$ (strict proportionality, a fixed relative depth) at $P < 10^{-31}$ and $P < 10^{-300}$ respectively. Depth grows with the floor, but far more slowly than proportionally.

The certified counterexample (floor $352$, depth $5$) is held out of this fit and is a $\sim2\sigma$ high outlier under it ($\mathbb E = 1.64$, $P(\text{depth}\ge5) = 0.026$), as expected of a circuit found by a search targeting deep valleys rather than by uniform sampling. It is consistent with the law and must not be pooled into it.

\subsection{The valley is a wide, contiguous, interior window in temperature}\label{sec:window}

Recomputing $\nEB(\Phi_p)$ on the full grid for all $6{,}538$ valley circuits, rather than reading the scan's recorded minimiser (which is biased to the lower edge of any plateau by its update rule), gives the geometry of the dip.

\begin{center}
\begin{tabular}{lr}
\toprule
sub-floor window contiguous in $p$ & \textbf{$6{,}538 / 6{,}538 = 100.00\%$}\\
median window width & 6 of 16 grid points, $\Delta p \approx 0.38$\\
windows resolved by a single grid point & $3.56\%$\\
median midpoint of the minimising set & $p = 0.4300$\\
$\nEB < \mathrm{floor}$ at $p = 1$ & $780$ $(11.9\%)$\\
$\nEB > \mathrm{floor}$ at $p = 1$ & $4{,}675$ $(71.5\%)$\\
\bottomrule
\end{tabular}
\end{center}

\textbf{Every valley is a single connected interval of bath polarisation}, typically spanning nearly forty per cent of the temperature axis, sitting in the interior, and closing again before $T = 0$, where in $71.5\%$ of cases the index has returned \emph{above} the floor. The dip is a mid-temperature phenomenon, not a zero-temperature one, and not a knife-edge numerical artefact. Its midpoint drifts warm as $\varepsilon$ falls: median $p = 0.208,\ 0.335,\ 0.430,\ 0.462$ at $\varepsilon = 8,\,5,\,3,\,2 \times10^{-3}$.

The certified counterexamples show exactly this shape. For the deep anchor the index runs
\[
  \footnotesize
  \begin{aligned}
    &352,\,352,\,352,\,351,\,351,\,350,\,349,\,349,\,348,\,348,\,\mathbf{347},\\
    &347,\,347,\,348,\,349,\,351,\,353,\,356,\,361,\,367,\,374
  \end{aligned}
\]
as $p$ runs from $0$ to $1$ in steps of $0.05$: a flat-bottomed well centred near $p = 0.55$, back above the floor by $p = 0.8$, and a thermal \emph{bonus} of $+22$ at $T=0$. The shallow anchor behaves the same way on a smaller scale. The certified existence claim and the statistical ensemble are the same picture.

\subsection{Where the failure lives}\label{sec:wherefails}

Every observed valley has $\opn{A_0} \ge 0.98$ and a small branch offset. Theorem~\ref{thm:normalform} sharpens this into a two-condition statement: since
\begin{equation}\label{eq:floorasym}
  \floor \;\approx\; \frac{C}{-\ln\!\big(\cos^2\!\varphi\,\sqrt{|\sin 2\theta|}\big)}
\end{equation}
in the regime where the $(x,z)$ eigenvalues are a complex pair, a large floor, and hence room for a valley, requires \textbf{both} $\varphi \to 0$ \textbf{and} $\theta \to \pi/4$. Neither alone suffices. The near-unitary corner is therefore two-dimensional, not one-dimensional, which is why searches that widened only in $\varphi$ found nothing (\S\ref{sec:obstruction}).

Two limitations of the measurement must be recorded. First, $p$ is sampled on 16 points of spacing $0.0633$; a window narrower than one spacing can fall between grid points, so \textbf{all rates quoted above are lower bounds at fixed grid resolution}. The correction is small (the observed width distribution rises from $3.56\%$ at one grid point to a maximum at four before flattening, so very narrow windows are rare), but it is a bound, not an estimate. Second, and for the same structural reason as the blind spot recorded in \S\ref{sec:obstruction}, the search resolves valleys only down to the index resolution of an integer: dips smaller than one unit are invisible by construction.

\section{Floor plus bath term}\label{sec:bathterm}

Define the \textbf{bath term} as the difference
\begin{equation}\label{eq:bathterm}
  \beta(U,S,p) \;:=\; \nEB(\Phi_p) - \floor(U),
  \qquad\text{so that}\qquad \nEB(\Phi_p) \;=\; \floor(U) + \beta(U,S,p).
\end{equation}

This is a definition, not a claim. (The letter $\beta$ also names a circuit angle; that angle appears only inside the tuple $(\theta,\varphi,\kappa,\beta)$, so no confusion arises.) By Theorem~\ref{thm:floor}, $\beta(U,S,0) = 0$. By \S\ref{sec:valley}, \textbf{$\beta$ is of indefinite sign}: it is positive on the overwhelming majority of the parameter space and negative on a set whose measure is now quantified: below one per cent even deep in the near-unitary corner, where it plateaus at $0.739\%$, falling off marginally as a Gaussian in $\varepsilon$ with measured exponent $2.02$ (\S\ref{sec:cutoff}), and persisting at order $10^{-6}$ beyond $\varepsilon\approx10^{-2}$ through the shallow-floor population of \S\ref{sec:fartail}. We therefore use the neutral name \emph{bath term} for the signed quantity, and reserve the \emph{thermal bonus} of the title for its positive, and overwhelmingly typical, regime, whose certified bound is this section's subject.

The decomposition remains useful because the two terms are of entirely different magnitude and character.

\paragraph{The architecture term dominates.} Over the certified $10\times6$ surface the ratio
$\nEB(\Phi_1)/\nEB(\Phi_0)$ never exceeds $21/14 = 3/2$ exactly, while the coupling strength moves
the index across the same surface by a factor $1213/3 \approx 404$. Because $\nEB$ is integer
valued, certifying both endpoints renders these \emph{exact rationals}.

\begin{center}
\begin{tabular}{lccccccc}
\toprule
$g$ & $\opn{A}$ & $p=0$ & $p=0.5$ & $p=0.9$ & $p=1$ & bath term & ratio\\
\midrule
0.05 & 0.9990 & 889 & 951 & 1132 & 1213 & $+324$ & $1.36$\\
0.10 & 0.9960 & 223 & 238 & 284 & 305 & $+82$ & $1.37$\\
0.20 & 0.9842 & 56 & 60 & 72 & 78 & $+22$ & $1.39$\\
0.30 & 0.9647 & 25 & 27 & 32 & 36 & $+11$ & $1.44$\\
\textbf{0.40} & 0.9378 & 14 & 15 & 18 & \textbf{21} & $+7$ & $\mathbf{3/2}$\\
0.55 & 0.8846 & 8 & 8 & 10 & 11 & $+3$ & $1.38$\\
0.70 & 0.8177 & 5 & 5 & 6 & 6 & $+1$ & $1.20$\\
0.85 & 0.7394 & 3 & 4 & 4 & 4 & $+1$ & $1.33$\\
\textbf{1.00} & 0.6527 & 3 & 3 & 3 & \textbf{3} & $\mathbf{0}$ & $\mathbf{1}$\\
\bottomrule
\end{tabular}
\end{center}

\noindent The coupling family is $\varphi(g) = g\varphi_0$, $\kappa(g) = g\kappa_0$, $\theta(g) = \pi/4 + g(\theta_0-\pi/4)$, $\beta = \beta_0$; at $g=0$ the loop is decoupled and at $g=1$ it is the reference point. All sixty cells are certified (\S\ref{sec:numerics}).

\begin{center}
  \fbox{\begin{minipage}{0.85\textwidth}\centering
    \textbf{Temperature is a second-order effect on the entanglement-breaking index;\\
    the interaction architecture is first-order.}
  \end{minipage}}
\end{center}

This statement survives \S\ref{sec:valley} intact, and the scan strengthens it: across $6{,}538$ observed valleys the depth never exceeds $4$, and the certified counterexample reaches $5$ against an index of $352$, i.e.\ a relative effect below $1.5\%$ in the worst case seen.

\paragraph{Directionality.} At the reference circuit with $p=1$ the certified indices are $3$, $4$ and $14$ for bath polarisation along $z$, $x$ and $y$: a factor of $4.7$ from the bath's \emph{direction} at fixed temperature, all collapsing to the common floor $3$ by $p \approx 0.6$.

\paragraph{The reference point sits where the bath term vanishes.} At full coupling $g=1$ the ratio is $3/3 = 1$ exactly: $\nEB = 3$ for \emph{every} $p \in [0,1]$, certified by complete interval covering of the temperature axis. This is the strong-coupling limit of a systematic pattern: the bath term is a weak-coupling phenomenon, and strong coupling suppresses it in both directions.

\section{A structural zero}\label{sec:structuralzero}

The Bloch matrix carries an exact identity that is invisible in floating point.

\begin{lemma}\label{lem:structuralzero}
For every choice of the remaining parameters and every bath polarisation $p$,
\[
  A_{yz} \;=\; 0 \quad\Longleftrightarrow\quad \cos a_0 + \cos a_1 = 0,
\]
in particular whenever $a_0 + a_1 = \pi$. The transposed element $A_{zy}$ does not vanish.
\end{lemma}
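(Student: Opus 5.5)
The plan is to compute $A_{yz}=\tfrac12\Tr[\sigma_y\Phi_p(\sigma_z)]$ in the Heisenberg picture, peeling the round unitary \eqref{eq:U} apart factor by factor, exactly as in the proof of Theorem~\ref{thm:normalform}. The first step disposes of $\beta$. The final rotation $R_y(\beta)_M$ commutes with $Y_M$, so $R_y(\beta)^\dagger Y R_y(\beta)=Y$, and $A_{yz}$ is independent of $\beta$. We may therefore work with $U'=U_fU_wU_W$ and evaluate
\[
  2A_{yz}=\Tr\bigl[\,U'^\dagger (Y_M\otimes\I_{FL})U'\;(Z_M\otimes\tau_S(p)^{\otimes2})\bigr].
\]
Writing $c=\cos\varphi$, $s=\sin\varphi$ and $S_{ML}=\mathrm{SWAP}_{ML}$, the partial swap gives
\[
  U_f^\dagger Y_MU_f=c^2Y_M+s^2Y_L+ics\,(S_{ML}Y_M-Y_MS_{ML}),
\]
which splits $A_{yz}$ into a direct term, a swapped term and a cross term. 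By Proposition~\ref{prop:quadratic} each term is at most quadratic in $p$, so they can be tracked coefficient by coefficient.

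The second step evaluates each term after further conjugation by $U_w$ (supported on $FL$) and $U_W=\sum_m|m\rangle\!\langle m|\otimes V_m$ with $V_m=R_y(a_m)$.

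\emph{Direct term.} $Y_M$ commutes with $U_w$. Under $U_W$ it becomes off-diagonal in $M$, of the form $|0\rangle\!\langle1|\otimes V_0^\dagger V_1$ plus its Hermitian conjugate. Its trace against the $M$-diagonal $Z_M$ vanishes, so this term contributes nothing.

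\emph{Swapped term.} $Y_L$ is first dressed by $W$ into an operator $B$ on $FL$. Then $U_W^\dagger(\I_M\otimes B)U_W=\sum_m|m\rangle\!\langle m|\otimes V_m^\dagger BV_m$, and pairing with $Z_M$ gives
\[
  \sum_m(-1)^m\Tr\bigl[B\,(V_m\tau_SV_m^\dagger\otimes\tau_S)\bigr].
\]
Here only the rotated $F$-Bloch vectors $O_y(a_m)\,p\,\hat s$ enter, so the result is linear in $\cos a_m$ and $\sin a_m$.

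\emph{Cross term.} Using $S_{ML}|j,f,l\rangle=|l,f,j\rangle$, the commutator transfers the $M$-label onto $L$. One then needs an $F$-trace identity for $W$ analogous to \eqref{eq:Gtrace}, which is where $\kappa$ enters.

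The third step collects the contributions. I expect the $\sin a_m$ pieces to cancel between the two branches, as in the proof of Theorem~\ref{thm:normalform}, leaving a factorisation
\[
  A_{yz}=(\cos a_0+\cos a_1)\,g(\varphi,\kappa,p,S),
\]
with $g$ an explicit trigonometric polynomial, quadratic in $p$. The direction ``$\Leftarrow$'' then follows at once. For supplementary angles, $a_1=\pi-a_0$ gives $\cos a_1=-\cos a_0$, hence $A_{yz}=0$.

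For ``$\Rightarrow$'' we need $g\ne0$. This holds on the generic set where $s\ne0$ and the $\kappa$-dependent factor is nonzero, and I would state the equivalence with that nondegeneracy made explicit. To see that $A_{zy}$ does not vanish, I would run the same three-term expansion with the roles of $Y$ and $Z$ exchanged. There $Z_M$ is diagonal and survives $U_W$ in the direct term, producing a piece proportional to $\cos a_0-\cos a_1$ or to a $\sin$ combination that is not killed by supplementarity. A single explicit evaluation at the reference point \eqref{eq:refpoint} then suffices to exhibit a nonzero value.

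The main obstacle is the cross term. It is the only place where $\kappa$ and the swap interact, and the clean cancellation \eqref{eq:Gtrace} used at $p=0$ no longer applies once $\tau_S$ is not maximally mixed. I would first try to show that its $\tau_S$-dependent part pairs the two branches through the combination $\Tr[(V_0^\dagger Z V_0+V_1^\dagger Z V_1)\,\cdot\,]$, which equals $(\cos a_0+\cos a_1)\,Z$ plus a $\sin$ term in $X$. I would then check that the $X$ piece traces to zero against the $W$-dressed $Y_L$ structure. If that fails for general $S$, I would restrict to the $z$-polarised bath used throughout and verify the factorisation there by direct expansion.
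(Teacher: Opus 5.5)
Your skeleton is the paper's: the same three-term split of the partial swap, with only the cross term able to produce a $\sigma_y$ component. The paper gets that last point in one stroke from reality, writing $U=P_1-iP_2$ with $P_1,P_2$ real. The gap is that the cross term is the whole lemma, and you leave it as an expectation that is partly wrong.

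\textbf{The missing cross-term computation.} What is needed is the $p\neq0$ replacement for \eqref{eq:Gtrace}. Write $C=e^{-i\kappa Z_FY_L/2}$ and $U_wU_W(\sigma_z\otimes\tau_S^{\otimes2})U_W^\dagger U_w^\dagger=\sum_m(-1)^m\ket{m}\bra{m}\otimes G_m$, with $G_m=C(V_m\tau_SV_m^\dagger\otimes\tau_S)C^\dagger$. The swap then turns the cross term into $A_{yz}=-\tfrac12\cos\varphi\sin\varphi\sum_m\Tr[X_LG_m]$. Since $C^\dagger X_LC=\cos\kappa\,X_L+\sin\kappa\,Z_FZ_L$,
\[
  \sum_m\Tr[X_LG_m]=2\cos\kappa\,\Tr[X\tau_S]+\sin\kappa\,\Tr[Z\tau_S]\,\Tr\bigl[(V_0^\dagger ZV_0+V_1^\dagger ZV_1)\,\tau_S\bigr].
\]
Your branch combination is exactly right. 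But the $\sin a_m$ pieces do not cancel between the branches; at supplementary angles they add to $2\sin a_0$. They vanish, together with the branch-independent $\cos\kappa$ term, only because $\Tr[X\tau_S]=0$.

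\textbf{The bath axis is a hypothesis.} For an $x$-polarised bath your factorisation $A_{yz}=(\cos a_0+\cos a_1)\,g(\varphi,\kappa,p,S)$ is false: $A_{yz}=-p\cos\varphi\sin\varphi\cos\kappa$ for all $a_0,a_1$. You can check this already at $\kappa=0$, $p=1$. There $F$ decouples after $U_W$, so before the final $y$-rotation $\Phi(\sigma_z)=c^2\sigma_z+ics\,[\sigma_z,\ket{+}\bra{+}]$ in your notation, giving $A_{yz}=-cs$. For a $y$-polarised bath $A_{yz}\equiv0$, so ``$\Rightarrow$'' fails there. Hence $S=Z$ is a hypothesis of the lemma, exactly as in step (6) of the paper's proof, not a fallback.

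\textbf{Nondegeneracy.} With $S=Z$ the formula gives $A_{yz}=-\tfrac12p^2\cos\varphi\sin\varphi\sin\kappa\,(\cos a_0+\cos a_1)$. You are right that ``$\Rightarrow$'' needs a nondegeneracy hypothesis, and the paper's proof silently drops it: at $p=0$, Theorem~\ref{thm:normalform} gives $A_{yz}=0$ for every $a_0,a_1$. But the condition is $p\sin2\varphi\sin\kappa\neq0$, so $p\neq0$ and $\cos\varphi\neq0$ must be added to yours.

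\textbf{The swapped term.} It is identically zero, not a nontrivial function of $\cos a_m,\sin a_m$. The reason is that $Y_L$ commutes with $Z_FY_L$ and with $U_W$, and $\Tr Z_M=0$.

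\textbf{The element $A_{zy}$.} The mechanism you propose does not occur.
\begin{itemize}
\item $\beta$ no longer drops out, since $R_y(\beta)^\dagger ZR_y(\beta)=\cos\beta\,Z-\sin\beta\,X$.
\item The direct term vanishes: $\Tr[ZY]=0$, and the $X_M$ part pairs to $i\delta-i\delta=0$.
\item The swapped term also vanishes.
\item The whole value comes from the cross term. For supplementary angles and $S=Z$ it is $A_{zy}=p\cos\varphi\sin\varphi\,\delta\cos\kappa\sin\beta$, with $\delta=\cos\frac{a_0-a_1}{2}$.
\end{itemize}
In particular $A_{zy}=0$ at $\beta=0$, so ``does not vanish'' can only mean ``not identically''. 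Your fallback, a certified evaluation at the reference point where $\beta_0\neq0$ and $\cos\kappa_0\neq0$, proves exactly that. It gives more than the paper's parenthetical remark, which only notes that its proof step fails for $\sigma_y$.
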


\begin{proof}
Write $c = \cos(\kappa/2)$, $s = \sin(\kappa/2)$, and let $S = \mathrm{SWAP}_{ML}$, $R = R_y(\beta)\otimes\I\otimes\I$, $V = U_wU_W$.

\emph{(1) Reality.} Since $-iY$ is real, $R_y(\cdot)$ is real orthogonal; and since $-i(Z\otimes Y)$ is real, so is $U_w$. Hence $R$, $U_W$, $V$ are real. The sole non-real factor is $U_f = \cos\varphi\,\I - i\sin\varphi\,S$, giving $U = P_1 - iP_2$ with $P_1 = \cos\varphi\,RV$ and $P_2 = \sin\varphi\,RSV$ both real.

\emph{(2) Isolating the $y$-component.} For $2\times2$ Hermitian $G$ one has $\Tr[\sigma_yG] = -2\,\mathrm{Im}\,G_{01}$, so $A_{yz}$ is carried entirely by the imaginary part of the reduced output. With $\rho = \sigma_z$ and ancilla $\alpha = \tau_S(p)^{\otimes2}$ both real symmetric,
\[
  U(\rho\otimes\alpha)U^{\dagger} = \big[P_1(\rho\otimes\alpha)P_1^{\mathsf T} + P_2(\rho\otimes\alpha)P_2^{\mathsf T}\big] + i\,\big[M - M^{\mathsf T}\big],
  \quad M = P_1(\rho\otimes\alpha)P_2^{\mathsf T},
\]
and $M - M^{\mathsf T} = \cos\varphi\sin\varphi\;R\,[W,S]\,R^{\mathsf T}$ with $W = V(\sigma_z\otimes\alpha)V^{\mathsf T}$.

\emph{(3) $R$ is harmless.} $R$ acts only on $M$, so $\Tr_{FL}[RXR^{\mathsf T}] = R_y(\beta)\Tr_{FL}[X]R_y(\beta)^{\mathsf T}$, and $R_y(\beta)$ is a rotation about $y$, preserving the $y$-component. The claim reduces to $\Tr_{FL}[[W,S]] = 0$.

\emph{(4) $W$ is block diagonal in $M$.} $\sigma_z$ is diagonal in the control basis; $U_W$ acts on $M$ only through that control; $U_w$ acts trivially on $M$. Hence, with $D_m = C\,(R_y(a_m)\otimes\I)$ and $C = \exp(-i\tfrac\kappa2 Z\otimes Y)$,
\[
  W \;=\; \sum_{m} (-1)^m\, |m\rangle\!\langle m| \otimes G_m, \qquad G_m = D_m\,\alpha\,D_m^{\mathsf T} \ \text{ real symmetric}.
\]
(This is the step that fails for $\rho = \sigma_y$, which is not diagonal in the control basis, hence $A_{zy} \neq 0$.)

\emph{(5) Tracing the commutator.} Using $S|m,f,l\rangle = |l,f,m\rangle$ and block diagonality, the resulting $2\times2$ matrix is antisymmetric with single entry $a = [\Tr_F(G_0 + G_1)]_{01}$.

\emph{(6) Evaluating the $L$-marginal.} With $\tau_m = R_y(a_m)\,\tau_S(p)\,R_y(a_m)^{\mathsf T}$ and $z_m = \Tr[\tau_m Z]$, conjugation by $C$ gives
\[
  \Tr_F[G_m] \;=\; \tfrac12\I \;+\; \tfrac{p}{2}\,(c^2-s^2)\,Z \;+\; cs\,p\,z_m\,X,
\]
so $a \propto cs\,p\,(z_0+z_1)$. Since $\tau_S(p)$ has Bloch vector $(0,0,p)$ and $R_y(a)$ rotates it by $a$ about $y$, $z_m = p\cos a_m$. Therefore $a \propto cs\,p^2\,(\cos a_0 + \cos a_1)$, which vanishes precisely when $\cos a_0 + \cos a_1 = 0$. For the supplementary construction $\cos(\pi-2\theta) + \cos 2\theta = 0$ identically.
\end{proof}

The identity is thus \textbf{designed into the interaction}, not accidental: it is the supplementarity of the two branch angles. Theorem~\ref{thm:normalform} is the $p=0$ limit of the same fact carried further: there the entire $y$ row and column of $A_0$ decouple, and the $F$--$L$ coupling $\kappa$ disappears from the channel altogether. Lemma~\ref{lem:structuralzero} shows what survives at finite $p$; Theorem~\ref{thm:normalform} shows what is left at $p=0$.

Certification confirms the identity at the reference point as $A_{yz} \in [\,\pm 1.9\times10^{-76}\,]$, and it holds to machine precision across 600 random (circuit, temperature) pairs, $A_{yz}$ being the unique vanishing entry.

\textbf{The converse is sharp.} Releasing supplementarity restores a generic nonzero value: over 300 random circuits with independent branch angles, $\max|A_{yz}| = 0.337$. Sweeping $a_1$ through $\pi - a_0$ at fixed remaining parameters, $|A_{yz}|$ falls to $1.4\times10^{-17}$ exactly at the supplementary point and grows linearly in the offset $d$, consistent with $\cos a_0 + \cos a_1 \simeq -d\sin a_0$ to first order.

The asymmetry matters: $A$ is non-normal, with $\lVert[A,A^{\dagger}]\rVert = 0.195$ and spectrum $0.4952 \pm 0.3439i,\ 0.5733$ at the reference point. Lemma~\ref{lem:structuralzero} exhibits one source of that non-normality, a vanishing entry with a nonvanishing transpose, and non-normality is exactly what obstructs the spectral-radius question of \S\ref{sec:spectral}.

\section{The spectral-radius question}\label{sec:spectral}

Theorem~\ref{thm:qubit} is stated in terms of $\|A\|_2$. Numerically, the
sharper statement with the \emph{spectral radius} in its place,
\begin{equation}\label{eq:open}
  n_{\mathrm{EB}}(\Phi)\;\overset{?}{\le}\;
  \left\lceil\frac{\ln\bigl(2\sqrt{r_\ast^{2}+3}/(1-r_\ast)\bigr)}{\ln(1/\rho(A))}\right\rceil ,
\end{equation}
held at every one of the $3997$ sampled channels of
Fig.~\ref{fig:landscape}, with median ratio $1.67$ to the true index and no
violation; see also
Fig.~\ref{fig:landscape}(b), where it is $\rho(A)$ rather than $\|A\|_2$ that
organises the data. It does not follow from
the argument of \S\ref{sec:contractive}, because $\|A^{n}\|_2\ne\rho(A)^{n}$ for non-normal $A$. By
Gelfand's formula $\|A^n\|_2^{1/n}\to\rho(A)$, so Eq.~\eqref{eq:open} holds
asymptotically up to an $A$-dependent constant; the question is whether that
constant can be removed. This is the same distinction between the
\emph{stability gap} $1-\rho(A)$ and the \emph{contraction margin} $1-\|A\|_2$
disentangled in \cite[Sec.~II\,B]{PaperI}.

Theorem~\ref{thm:qubit} establishes, for strictly contractive qubit channels with full-rank fixed point,
\begin{equation}\label{eq:pIIbound}
  \nEB(\Phi) \;\le\; \left\lceil \frac{\ln\!\big(2\sqrt{r^2+3}\,/\,(1-r)\big)}{\ln(1/\opn{A})} \right\rceil,
  \qquad r = \lVert v^*\rVert = \lVert(\I-A)^{-1}c\rVert,
\end{equation}
and Eq.~\eqref{eq:open} asks whether $\opn{A}$ may be replaced by the spectral radius $\rho(A)$. The question is open because $\lVert A^n\rVert \neq \rho(A)^n$ for non-normal $A$, and $A$ here is non-normal (\S\ref{sec:structuralzero}).

Nothing in \S\ref{sec:floor}--\S\ref{sec:structuralzero} rests on Theorem~\ref{thm:qubit}. The indices entering the comparison below are certified directly in \S\ref{sec:numerics}, and the $\opn{A}$ and $\rho(A)$ expressions enter only as formulas whose predictions are held against those certified values; the comparison stands on them alone.

The thermal family supplies a limit in which the two candidates separate cleanly. As the coupling $g \to 0$ the channel approaches a unitary, $\opn{A} \to 1$, and the index diverges; both indices remain exactly computable by binary exponentiation and the closed form $c_n = (\I-A)^{-1}(\I-A^n)c$. At $g = 0.00625$ the certified indices are $56{,}879$ and $77{,}446$, so the ratio at that coupling is the \emph{exact rational} $77446/56879$. Comparing each form's prediction against it:

\begin{center}
\begin{tabular}{lcc}
\toprule
 & certified value & agreement\\
\midrule
ratio $77446/56879$ & $1.36159215176075529 \pm 2.5\times10^{-18}$ & (baseline)\\
$\opn{A}$ form & $1.15968560102835788 \pm 5.4\times10^{-19}$ & $85.171\%$\\
$\rho(A)$ form & $1.37547622570350359 \pm 3.0\times10^{-18}$ & $\mathbf{98.991\%}$\\
\bottomrule
\end{tabular}
\end{center}

Across the sequence $g = 0.2 \to 0.00625$ the $\rho(A)$ agreement rises monotonically ($73.95 \to 98.99\%$) while the $\opn{A}$ agreement falls ($99.88 \to 85.17\%$, the high value at $g=0.1$ being a crossing rather than convergence). \textbf{The conjectured form is asymptotically tight in a limit where the proved form is not.}

This is evidence of a different character from sampling: it identifies a regime in which the $\rho(A)$ bound is not merely valid but \emph{saturated}, and does so with enclosures at $10^{-18}$. It bears directly on the open problem of this section and is recorded here.

The associated constant is stable: $\nEB\,(1-\rho(A)) \to 1.0468$ at both temperatures, and the normalised quantity $\nEB(1-\rho)\big/\ln\!\big(2\sqrt{r^2+3}/(1-r)\big) \to 0.842$. A caution about the arrows: each term of these sequences is certified; the limits themselves are extrapolations, with no proved rate of convergence, and are reported as such. Neither matches $\ln 2$, $\ln 3$ or $\ln(2\sqrt3)$; a closed form remains open.
One closing remark ties this section to the landscape of
\S\ref{sec:landscape}. There the product $N\,[1-\rho(A)]$ stayed within
$[0.60,1.50]$ over the $3997$ sampled channels: it is $\rho(A)$, not
$\opn{A}$, that organises the data. Here the conjectured $\rho(A)$ form is
saturated asymptotically while the proved $\opn{A}$ form is not. The
one-clock law and the spectral-radius conjecture are two ends of the same
question, now inside one paper.

\section{Physical reading}\label{sec:physical}

The decomposition inverts the intuition that motivates it.

At $T = 310\,$K one has $k_BT = 26.7\,$meV, so $p = \tanh(\hbar\omega/2k_BT)$ sorts degrees of freedom sharply:

\begin{center}
\begin{tabular}{lcrl}
\toprule
degree of freedom & $\hbar\omega$ & $p$ at 310 K & regime\\
\midrule
nuclear spin, 1 T & $0.18\,\mu$eV & $3\times10^{-6}$ & maximally \textbf{hot}\\
electron spin, 1 T & $116\,\mu$eV & $0.0022$ & maximally hot\\
\textbf{$p = \tfrac12$ crossover} & $29.3\,$meV & $0.500$ & $\approx 42\,\mu$m, far-IR\\
310 K Wien peak & $133\,$meV & $0.986$ & effectively \textbf{cold}\\
mid-IR C=O stretch & $211\,$meV & $0.9993$ & cold\\
electronic / optical & $2\,$eV & $1.000000$ & frozen\\
\bottomrule
\end{tabular}
\end{center}

\textbf{Thermodynamically warm is not the same as hot in the entanglement-breaking sense.} At body temperature every vibrational and electronic mode sits at $p \approx 1$: frozen in its ground state, the \emph{coldest possible bath} by this measure. It is the spin degrees of freedom that are maximally hot.

The consequence cuts against both reflexes. Nuclear-spin proposals \cite{Fisher2015} select exactly the degrees of freedom that are maximally hot here: at $p \approx 3\times10^{-6}$ the loop operates on its architecture floor, with no thermal advantage whatsoever. Their entire advantage is weak coupling, which drives $\opn{A} \to 1$ and inflates the floor. Conversely, vibrational and excitonic modes have the thermal bonus fully available and gain almost nothing from it, because strong coupling has already crushed their floor to $3$.

\begin{center}
  \fbox{\textbf{Coldness and weak coupling are not competing resources.}}
\end{center}

Weak coupling supplies both the large floor and whatever bath sensitivity exists; strong coupling suppresses both. There is one dominant knob, and it is the interaction, not the temperature. We stress that no monotonicity in temperature is claimed: by \S\ref{sec:valley} the bath term takes both signs.

The infrared appears in this analysis at exactly one point, and as a landmark rather than a mechanism: the $p = \tfrac12$ crossover falls at $29.3$~meV $\approx 42\,\mu$m, in the far infrared, marking the division between baths that are effectively frozen and baths that are effectively maximally mixed at body temperature. The $9.35\,\mu$m Wien peak of a 310~K blackbody lies already deep in the frozen regime, $p = 0.986$.

\section{An obstruction map for the bath term}\label{sec:obstruction}

The counterexamples of \S\ref{sec:valley} were found only after four separate reductions had each been tested and had each failed on a thin set. We record the failures, because together they delimit what a positive result about $\beta$ could have looked like, and because three of the four would have been reported as ``holds on $N$ samples'' by any search that did not probe deep index values.

Write $\nu(A) := \min\{n : \nuc{A^n} \le 1\}$ and $\nu_2(A,c) := \min\{n : F(n) \le 1\}$ with $F(n) := \nuc{A^n}^2 + |c_n|^2$, using the necessary condition $F(n) \le 1$ for entanglement breaking proved in the companion note \cite{EBnorm}. No certified claim of this paper relies on that inequality: within this section it serves as a bookkeeping device for classifying failed proof strategies, and at $p = 0$, where $c_n = 0$, it reduces to the unital characterisation of \cite{Ruskai2003}, giving $\nEB(\Phi_0) = \nu_2(A_0,0) = \nu(A_0)$ exactly. Subscripts $p$ and $0$ refer to the family at polarisation $p$ and at $p = 0$; for the second-order row below, expand $\nuc{A_p^n} = \nuc{A_0^n} + p^2\mu_n + O(p^3)$ and $c_n(p) = p\,\gamma_n + O(p^2)$, there being no first-order norm term by (i) below.

\begin{center}
\begin{tabular}{clll}
\toprule
\# & proposed step & status & scale of failure\\
\midrule
1 & $\nuc{A_p^n} \ge \nuc{A_0^n}$ pointwise & \textbf{false} & 2139 / 30000\\
2 & $\nu(A_p) \ge \nu(A_0)$ & \textbf{false} & 7 / 14856\\
3 & $F_p(n) \ge F_0(n)$ for $n < \nu(A_0)$ & \textbf{false} & 246 / 69356\\
4 & second-order bracket $2\nuc{A_0^n}\mu_n + |\gamma_n|^2 \ge 0$ & \textbf{false} & 10 / 1105\\
5 & $\nu_2(A_p,c_p) \ge \nu(A_0)$ & \textbf{false} & explicit, \S\ref{sec:valley}\\
\bottomrule
\end{tabular}
\end{center}

Three structural facts emerged along the way and remain valid independently of the refutation.

\paragraph{(i) The nuclear norm is stationary in $p$.} Fitting $\nuc{A(p)^n} - \nuc{A_0^n} \sim Cp^\alpha$ by successive halving gives $\alpha = 2.000$ in every case tested. Equivalently $\Tr[K_nA_1] = 0$ to $3\times10^{-15}$, where $K_n = \sum_j A_0^{\,n-1-j}W^{\mathsf T}A_0^{\,j}$ and $W$ is the polar factor of $A_0^n$. There is therefore no first-order deficit, and the naive concern that an $O(p)$ loss could not be met by an $O(p^2)$ translation gain does not arise.

\paragraph{(ii) The two second-order coefficients are not coupled.} Deliberately mismatching $\mu_n$ and $\gamma_n$ across unrelated circuits changes the failure rate of step 4 from $6/206$ to $4/206$, statistically indistinguishable. The translation term does not compensate the nuclear-norm term; the two are independent.

\paragraph{(iii) The translation term is almost never needed.} The set on which $\nuc{A_p^n} \le 1 < \nuc{A_0^n}$, i.e.\ where $|c_n|^2$ must do any work at all, has measure $\approx 10^{-5}$ in the sampled parameter space, and requires $\nuc{A_0^n}$ to lie within roughly $1\%$ of $1$. This is precisely the near-unitary corner in which the counterexamples live.

\paragraph{Methodological note.} Roughly $1.5\times10^5$ evaluations across four widened searches returned no counterexample to $\beta \ge 0$. Each search widened in $p$ and in circuit angle; none widened in \emph{index depth}. The counterexamples occur at $n \approx 350$. We record this because the same blind spot is available to any numerical study of composition indices: a search that fixes an upper cutoff on $n$ cannot see failures whose scale exceeds it, and the cutoff is rarely reported.

Theorem~\ref{thm:normalform} explains in retrospect why widening in $\varphi$ alone was never going to work. A large index needs $\cos^2\!\varphi\sqrt{|\sin2\theta|}\to1$, i.e.\ $\varphi\to0$ \textbf{and} $\theta\to\pi/4$ together; a search that widens one while the other sits at a generic value stays in the small-floor region where no valley can fit. \S\ref{sec:wherefails} states the corresponding two-condition criterion, and also records the two resolution limits of the successful scan (a finite grid in $p$, and the integer resolution of the index itself), which are the same species of blind spot as this one and belong beside it.

\section{Relation to prior work}\label{sec:park}\label{sec:priorwork}

Park's theorem~\cite[Thm.~1.1]{Park2026} states that $\nEB(\Phi)<\infty$ for
every PPT map $\Phi:M_{d}\to M_{d}$, completing a line of work
\cite{RJP2018,HRF2020} by removing the full-rank Perron hypotheses. It is worth
being precise about how the present results sit beside it.

First, the theorem does not apply to our channel. On the certified square
$\Phi$ is NPT, hence not PPT, and the hypothesis fails. The finiteness in
Eq.~\eqref{eq:N3} is not an instance of \cite[Thm.~1.1]{Park2026} but of
Theorem~\ref{thm:generald} here.

Second, the two hypotheses are independent, and so are the mechanisms. Park
assumes complete positivity and copositivity and concludes eventual
separability by a
Perron--Frobenius support splitting; we assume strict contraction onto a
full-rank fixed point and conclude it by an explicit separable ball around the
limit. Neither implies the other: a PPT channel need not be contractive (the
completely dephasing qubit channel is entanglement breaking, hence PPT, with
$\eta=1$), and a contractive channel need not be PPT (every point of our
certified square).In the thermal half the
contraction hypothesis is moreover certified directly, by the nonvanishing
enclosure of $\det(\I-A)$ (\S\ref{sec:numerics}).

Finally, a caveat on scope. Question~4.3 of~\cite{Park2026} asks whether
$\sup\{\nEB(\Phi):\Phi\in\PPT(d,d)\}$ is finite for $d\ge4$; our family is NPT
and lives at $d=2$, where that supremum is trivially $1$, so it says nothing
about that question. What it does supply is a physically motivated family in
which $\nEB$ is large and varies over two orders of magnitude
(Fig.~\ref{fig:landscape}) while remaining exactly computable and certifiable
at every point: useful as a testbed for index estimates, not as evidence about
PPT suprema.

\paragraph{With the threshold critique.} It has been shown \cite{NoThreshold} that uniform dephasing and dephasing-plus-damping channels are non-PPT at every finite decoherence rate, so any reported finite entanglement-breaking \emph{threshold} for them is either the coherent-information threshold misnamed, or an artefact of numerical tolerance. We agree, and take the constructive step the critique implies: for a strictly contractive channel the well-posed object is not a single-shot threshold but the \textbf{composition index} $\nEB$, which is finite, integer valued and, as demonstrated here, certifiable. The floor-plus-bath-term decomposition is a statement about that index.

\paragraph{With collision models.} Thermalising collision models are standard \cite{Ciccarello2022}, and the single-collision map for a partial-SWAP or Jaynes--Cummings interaction with a thermal ancilla is a generalised amplitude damping channel \cite{KSW2020}. Crucially, the present channel is \textbf{not} a GADC and the family is \textbf{not} a semigroup: $A$ is non-normal with complex spectrum, and the affine part acquires nonvanishing transverse components at finite $p$. Were it a GADC the index would follow in closed form from $\gamma_n = 1-(1-\gamma)^n$ and no certification would be needed. It is the feedback structure, the controlled rotation and the partial swap, that breaks closure and makes the index a nontrivial quantity.

\section{What is claimed, and what is not}\label{sec:claims}

The results above concern a single qubit undergoing repeated interaction with thermal ancillas under a specified unitary. They are statements about that model.

They are \textbf{not} claims about cognition, consciousness, or information processing in biological systems. The physiological temperature appears in \S\ref{sec:physical} solely to place familiar degrees of freedom on the $p$ axis, and the resulting observation, that vibrational modes are effectively cold and spins effectively hot at 310 K, is a statement about occupation numbers, not about biology's use of them. No claim is made that any biological system realises this loop, that $\nEB$ bounds any biological process, or that the number of surviving rounds has functional significance.

What the model does provide is a certified quantity where the literature has long had order-of-magnitude estimates \cite{Tegmark2000,Hagan2002}: given an explicitly specified interaction and bath, the number of rounds after which no entanglement with any reference can survive is an exact integer, and it can be proven.

The five criteria of~\cite{PaperI} ask what a self-consistent history
\emph{is}. The index asks how long the loop that produces it remains quantum,
and it is the first of the criteria to return a number rather than a verdict.
For the certified family that number is $3$.

Three features seem worth emphasising. The bound of
Theorem~\ref{thm:generald} rests only on contraction, so ``the loop must eventually stop carrying entanglement'' is not
a qubit accident and not a consequence of PPT-ness; it is a consequence of
having a unique full-rank stationary history at all. Both specialisations are quantitatively tight on isotropic unital channels:
Theorem~\ref{thm:qubit} overshoots by the factor $1.1309\ldots$ before
rounding, and Theorem~\ref{thm:generald} is exactly tight there
(Remark~\ref{rem:gbtight}), unusual for bounds assembled from a norm
inequality and, respectively, Weyl's theorem and a separable-ball estimate. And Eq.~\eqref{eq:oneclock} ties the new criterion to
an old one: the sixth property is not independent of the first.

\section{Numerics and certification}\label{sec:numerics}

This section collects the certification machinery for the whole paper. The
discipline is the one already used for the zero-temperature certificate of
\S\ref{sec:certified}: every numerical claim is a statement about the sign of
a real quantity or the exact value of an integer, established by one-sided
primitives in ball arithmetic, with floating point used as an oracle and never
as evidence. The subsections below state it in the form in which the thermal
half applies it, and record what was certified, at what precision, and with
what provenance.

\subsection{What the certificates establish}\label{sec:onesided}

Every numerical claim in this paper is a statement about the sign of a real quantity, or the exact value of an integer. Both admit rigorous verification, and we verify them rather than estimate them.

For a qubit channel the Choi state is two-qubit, where PPT and separability coincide \cite{HSR2003}. Establishing $\nEB = n$ therefore reduces to two sign determinations: $\lambda_{\min}[J(\Phi^{n-1})^{T_R}] < 0$, and $J(\Phi^{n})^{T_R} \succeq 0$.

\textbf{No floating-point eigenvalue solver is used at any point.} Eigenvalue routines return approximations without error bounds, and near an index boundary the quantity being tested is small; an unbounded approximation cannot decide its sign. We use instead:

\emph{Negativity by witness.} For the strict inequality we exhibit an explicit vector $v \in \mathbb{C}^4$ and certify the Rayleigh quotient $q = v^{\dagger}Hv/v^{\dagger}v < 0$, which implies $\lambda_{\min}(H) \le q < 0$. The witness is \emph{obtained} from a floating-point eigenvector, but its validity does not depend on that computation being accurate: any $v$ whose certified quotient is negative proves the claim. Floating point is used as an oracle, never as evidence.

\emph{Positivity by Sylvester.} For $H \succeq 0$ we certify that all four leading principal minors have strictly positive enclosures. Each minor is a polynomial in the matrix entries, so this requires only certified arithmetic.

Both primitives are one-sided and self-validating: a certificate that fails is inconclusive, never wrong. All computation is performed in Arb ball arithmetic \cite{Arb} via \texttt{python-flint} 0.9.0, at 256-bit working precision (1024-bit where noted).

\subsection{Closed forms that make certification tractable}

Three structural facts remove the operations that would otherwise dominate the error budget.

\emph{No matrix exponential.} Because $(Z \otimes Y)^2 = \I$, $Y^2 = \I$ and $\mathrm{SWAP}^2 = \I$, every factor of \eqref{eq:U} is closed form in $\cos$ and $\sin$. The whole construction is rational arithmetic on transcendental functions of four angles, each of which Arb encloses rigorously. No truncated series enters.

\emph{Binary exponentiation.} $A^n$ is computed by repeated squaring in $O(\log n)$ products. At $n = 77{,}446$ this is 17 squarings rather than 77,445 multiplications.

\emph{Closed-form affine accumulation.} Since $A$ is strictly contractive, $\I-A$ is invertible and $c_n = (\I-A)^{-1}(\I-A^{n})c$, computed via the closed-form $3\times3$ adjugate inverse. The certified enclosure of $\det(\I-A)$ simultaneously verifies the contractivity hypothesis the identity requires.

\subsection{The reference point}

At the reference parameters the certified Bloch data are, to twenty significant digits,
\[
  \begin{aligned}
    A_{xx} &= 0.40830056663876377104, & A_{yy} &= 0.55852499726810319929,\\
    A_{zz} &= 0.59696832325044397077, & c_{y} &= 0.24077693489993731635,
  \end{aligned}
\]
with $c_x = 0.050021658967955465362$ and $c_z = 0.053591230356734648996$, every entry enclosed to radius below $3.5\times10^{-21}$. The entry $A_{yz}$ encloses zero with radius $1.9\times10^{-76}$: it is a \textbf{structural zero of the channel} (Lemma~\ref{lem:structuralzero}), not a small number, and is certified as such.

The index certificate reads: witness quotients $-0.20527887571148096620$ at $n=1$ and $-0.045194989127257450843$ at $n=2$, and all four Sylvester minors strictly positive at $n=3$, the smallest being $1.0020223402460335147\times10^{-3}$. Hence $\nEB = 3$, certified. These enclosures agree to twenty significant digits with the independently computed values of \S\ref{sec:certified} (Eq.~\eqref{eq:lamvalues}); the two computations share no code path, so the agreement is an independent check on both.

\begin{remark}[on precision]
The double-precision evaluation of the $n=1$ quantity returns $-0.205278875711481235$, which departs from the certified enclosure in the fifteenth digit. The discrepancy is immaterial here, where the quantity is of order $10^{-1}$; it would not be immaterial at an index boundary, where the deciding minor can be of order $10^{-9}$.
\end{remark}

\subsection{What does not work}\label{sec:whatfails}

Two obvious shortcuts fail, and the failure is instructive
(\S\ref{sec:certified}, Appendix~\ref{app:cert}). Evaluating the whole
parameter box in interval arithmetic, entering the four angles as balls of the
box half-width, inflates the enclosure of
$\lambda_{\min}[J(\Phi^{2})^{T_{R}}]$ to width $\approx0.13$ against a margin
of $0.045$: each angle occurs many times inside $U$ and $U^{\dagger}$, and
ball arithmetic cannot know that those occurrences are the same number.
Uniform bisection restores tightness only at half-width
$\approx8\times10^{-5}\pi$, i.e. $16^{4}=65\,536$ boxes: brute force, not a
certificate. Nor does the generic perturbation constant suffice: pushed
through two powers of the channel it yields a drift of $0.078$, larger than
the margin it must beat. What works at zero temperature is the structured,
derivative-based perturbation analysis of Appendix~\ref{app:cert}; what works
on the temperature axis is the adaptive covering of \S\ref{sec:certresults},
affordable because the dependence on $p$ is exactly quadratic
(Proposition~\ref{prop:quadratic}). The same obstruction, met from the other
side for identities rather than signs, is recorded in
Remark~\ref{rem:boundary}.

\subsection{Certified results}\label{sec:certresults}

\paragraph{(i) The full temperature axis.} Rather than sample the bath polarisation, we cover it. Adaptive interval bisection of $p \in [0,1]$ produced \textbf{223 subintervals, every one certified, with no failures and coverage of the closed interval}. The conclusion, $\nEB = 3$ for \emph{every} $p \in [0,1]$, is therefore a statement about a continuum, not a sample. A single interval spanning all of $[0,1]$ does \textbf{not} certify: the witness quotient returns $[\pm 6.29]$, its radius swamping the sign. Subdivision is essential, not cosmetic.

\paragraph{(ii) The floor theorem.} Over five bath directions, including two off the coordinate axes, the difference between the resulting $(A,c)$ pairs at $p=0$ has upper bound $6.3\times10^{-76}$ and every entry certifiably contains zero.

\paragraph{(iii) Directionality.} At $p=1$ the certified indices for polarisation along $x$, $y$, $z$ are $4$, $14$, $3$. Every intermediate step is certified, not only the endpoints: fourteen consecutive certificates for the $y$ axis, whose deciding minor is $6.9\times10^{-8}$ against a radius of $1.5\times10^{-22}$.

\paragraph{(iv) The trade-off surface.} All sixty cells of the $\nEB(g,p)$ grid are certified, requiring 120 boundary certificates and 1.7~s. No cell disagreed with its double-precision value. Because the index is integer valued, certifying both endpoints renders the derived ratios \emph{exact rationals}: peak $21/14 = 3/2$, full coupling $3/3 = 1$, coupling span $1213/3$. Monotonicity in $p$ holds cell-by-cell across this grid: a statement about these sixty cells and nothing more, since monotonicity fails elsewhere by certified counterexample (\S\ref{sec:valley}).

\paragraph{(v) Weak coupling.} At $g = 0.00625$ both indices are certified exactly: $56{,}879$ at $p=0$ and $77{,}446$ at $p=1$, with $\det(\I-A)$ enclosures of $1.14532184357\times10^{-5}$ and $7.28036439255\times10^{-6}$ certifying strict contraction.

\subsection{Spectral quantities}

The spectral radius requires care. At weak coupling $A$ approaches a rotation and all three eigenvalue moduli collapse together; generic polynomial root-finding fails and returns enclosures of radius $\sim10^{-8}$, useless when multiplied by $n \sim 10^{5}$. We therefore exploit structure. The cubic has one real root $\lambda_3$ and a conjugate pair, and $\det A = \lambda_3|\lambda_c|^2$. Isolating $\lambda_3$ by certified sign-change bisection then yields $|\lambda_c| = \sqrt{\det A/\lambda_3}$ exactly. For $\opn{A}$ we bracket $\lambda_{\max}(A^{\mathsf T}A)$ between a certified Rayleigh quotient below and a Sylvester test on $\mu\I - A^{\mathsf T}A$ above, bisecting on $\mu$. Neither route calls a root solver. Working precision is 1024 bits.

At $g = 0.00625$ and $p=0$,
\[
  \rho(A) = 0.9999815957920990010778, \qquad \opn{A} = 0.9999844831577516289766;
\]
at $p=1$ the two agree to ten significant digits and separate only in the eleventh: the near-degeneracy that defeated the naive root-finder. At $p=0$ the fixed-point radius encloses zero to $10^{-60}$: the fixed point is certified to be \emph{exactly} the maximally mixed state, as $c_0 = 0$ requires (Theorem~\ref{thm:normalform}).

\subsection{Certifying Theorem~\ref{thm:normalform} and the closed-form floor}\label{sec:certthm3}

Theorem~\ref{thm:normalform} is proved analytically and needs no certificate. Its consequences do, and they were certified in the same framework.

\paragraph{The identity.} At the reference point, both counterexample circuits and a generic point, every entry of $A_0 - \cos^2\!\varphi\,O_y(\beta)\diag(\delta,\delta,1)$ encloses zero with radius below $3.6\times10^{-76}$, and $c_0$ encloses zero at the same scale. The $\kappa$-independence is certified pairwise: for $\kappa \in \{0,\,0.4,\,1.1,\,2.7,\,\pi\}$ at fixed remaining angles, every entry of $A_0(\kappa) - A_0(0)$ encloses zero at radius $\sim3.5\times10^{-76}$.

\begin{remark}[the boundary of the method]\label{rem:boundary}
It is natural to try to certify the identity over a \emph{box} and so obtain a continuum statement. This does not work, for a structural reason worth recording. Evaluating $A_0$ and its predicted form over a box in $(\theta,\varphi,\beta)$ of half-width $10^{-3}$ gives a difference enclosing zero with radius $5\times10^{-3}$: consistent with the identity, but equally consistent with the identity failing by anything smaller. Taking $\kappa$ over the whole interval $[0,\pi]$ inflates the enclosure to radius $1.57$ and certifies nothing at all, even though $\kappa$ provably does not occur in $A_0$: ball arithmetic evaluates $\kappa$ independently at each of its several occurrences and cannot see that their contributions cancel. \textbf{Interval arithmetic certifies signs and values, not identities.} The continuum content of Theorem~\ref{thm:normalform} is carried by its proof; the ball-arithmetic confirmations are pointwise, at $10^{-76}$.
\end{remark}

\paragraph{The closed-form floor.} Certifying a floor by \eqref{eq:floorclosed} means certifying the \emph{sign} of $\nuc{A_0^n} - 1$ on both sides of the crossing. A nuclear norm is a sum of singular values, exactly the kind of quantity the discipline of \S\ref{sec:numerics} forbids obtaining from an eigensolver; \eqref{eq:nucAn} removes the difficulty, because $\Frob{M^n}$ and $\det M^n$ are \emph{polynomial} in the entries of $M^n$, which is formed by binary exponentiation. At the two counterexample circuits:

\begin{center}
\begin{tabular}{lrl}
\toprule
circuit & $n$ & $\nuc{A_0^n} - 1$\\
\midrule
deep anchor, floor $352$ & 351 & $+1.52402925978\times10^{-3} \pm 2.0\times10^{-15}$\\
                          & 352 & $-1.58475929976\times10^{-3} \pm 2.9\times10^{-15}$\\
shallow anchor, floor $67$ & 66 & $+3.02404891041\times10^{-3} \pm 2.6\times10^{-15}$\\
                          & 67 & $-1.32656292618\times10^{-2} \pm 1.0\times10^{-14}$\\
\bottomrule
\end{tabular}
\end{center}

Both floors are therefore certified \textbf{from the closed form alone} (no Choi matrix, no partial transpose, no witness vector, no Sylvester minor), with the sign decided at twelve orders of margin. The two routes to these integers share no code path and agree.

\paragraph{The floor measure.} The integral \eqref{eq:measure} is evaluated by Arb's certified integrator, giving the enclosure $0.46427058204502844932 \pm 1.5\times10^{-21}$. The two facts the derivation depends on are certified rather than assumed: $\sin2\theta > 0$ across the support, and $\varphi^*(\theta) \in [0.46418029,\,0.95531662]$ strictly inside the $\varphi$ support $[0.062831853,\,1.5079645]$, so there is no clipping.

\subsection{Certifying the stratified scan}\label{sec:scan3cert}

The scan of \S\ref{sec:fartail} is float64, and \S\ref{sec:onesided} forbids a float64 number from entering this paper. But its output is not really a set of rates: it is a set of \emph{integer} claims about specific circuits, and every rate is a count of those integers over a denominator that is a count of draws. So both were certified.

\paragraph{The valleys.} Each of the $23$ carries four certificates, all by one-sided primitives and with no eigensolver anywhere: a witness vector establishes $\lambda_{\min} < 0$ for the partial-transposed Choi matrix at $n = \floor-1$ and at $p = 0$, Sylvester's criterion establishes positivity at $n = \floor$, and the same pair at the minimising $p$ establishes $\nEB(\Phi_{p^\star}) = \floor - 1$ exactly. Together these certify the floor, the valley and the depth as integers. All $92$ certificates pass, in $0.2$~s at 256 bits. Circuit parameters are ingested as exact binary64: $\mathrm{arb}(1.1709463765014003)$ has radius zero, whereas the same decimal read as a string is a different number by $2\times10^{-17}$, and the claim is about the circuit the scan actually drew.

\paragraph{The denominators.} A draw is kept exactly when $\nuc{A_0} > 1$ and $\nuc{A_0^{20}} \le 1$, so the selection is two sign decisions, and at $n=1$ the closed form collapses exactly because a rotation preserves the Frobenius norm: $\nuc{A_0} = a(1+2|\delta|)$. Every draw was regenerated from its seed and its floor decided in ball arithmetic, with each chunk's certified histogram required to equal the journalled one exactly. \textbf{All $12{,}000{,}000$ decisions certified, no mismatch and none left undecided}, in $6$ minutes at 128 bits. The floor is located by a linear scan upward, so the certification never relies on $\{n : \nuc{A_0^n} \le 1\}$ being upward closed. It is, entanglement breaking being closed under composition, but a procedure that does not need the fact cannot be broken by it.

\paragraph{Negative controls.} Both halves were shown capable of failing. Planting the false claims $\floor > F$ and $\floor \le F-1$ at a certified circuit produced rejections from the witness and Sylvester tests respectively; inflating a single journalled histogram bin by \emph{one} circuit in ten thousand made the denominator gate report the mismatch and exit non-zero.

\paragraph{Statistics kept as statistics.} Rates and their Clopper--Pearson intervals are functions of two integers and are computed at 30 digits, each bound verified by re-evaluating the exact binomial tail that defines it; the residuals are $10^{-27}$ or smaller. The refutation of $H_{\mathrm{flat}}$ is an exact integer tail. The fitted exponents of \eqref{eq:stratumeta} and of the ladder in \S\ref{sec:cutoff} remain statistics: their maxima are located numerically. What is rigorous there is the \emph{exclusion}, a difference of two exactly-evaluated log-likelihoods, $3.578$ for $\eta = 2$ against a threshold of $1.921$, which no plausible error in locating a maximum can close.

\subsection{Scope: what is proven and what is evidence}\label{sec:scope}

We distinguish three tiers, and do not blur them.

\paragraph{Proven.} Theorem~\ref{thm:floor}, Theorem~\ref{thm:normalform} and Lemma~\ref{lem:structuralzero} are proved analytically. The claim $\nEB = 3$ for all $p \in [0,1]$ at the reference circuit is established by complete interval covering; adaptive branch-and-bound over the five-dimensional box $|\theta-\theta_0|,|\varphi-\varphi_0|,|\kappa-\kappa_0|,|\beta-\beta_0| \le 10^{-3}$, $p \in [0,1]$ closes with \textbf{12,523 certified boxes, none unresolved}, in 62.6~s. The index is therefore constant on an uncountable family of channels.

\paragraph{Certified at sampled points.} The monotonicity scan (400 random circuits at 8 polarisations) was recomputed entirely in ball arithmetic with zero failures. The set of points remains finite, \emph{and it is now known that the property it tests is false in general}: see \S\ref{sec:valley}. This scan should be read as a measurement of how rare the failures are (its 3,200 points sample a region where the measured valley rate is far below $1/3200$) and never as evidence for the property.

\paragraph{Refuted, not open.} Monotonicity of $\nEB$ in bath temperature is \textbf{false}. What remains true is that no \emph{global} certificate is available in either direction by covering: as $\opn{A}\to1$ the index diverges and no finite covering exists. Halving the box half-width from $2\times10^{-3}$ to $10^{-3}$ took the covering from incomplete after 49,891 boxes to complete in 12,523. Local certificates are cheap; a global one is out of reach by this method.

\subsection{The valley scan: provenance and reproducibility}\label{sec:scanprov}

The statistics of \S\ref{sec:valley} are \textbf{statistics}, not certificates, and are labelled as such wherever they appear. This subsection records how they were produced, because the distinction only means something if the production is auditable.

\paragraph{Volume and provenance.} Three runs contribute $4{,}114{,}799$ circuits and $7{,}056$ valleys. Every number in \S\ref{sec:valley} derives from one frozen snapshot: a line count is fixed per input file, and both the per-bin tally and the extracted valley rows are derived from exactly that prefix, so no two quoted quantities come from different slices of a file that was still being written. The snapshot records the SHA-256 of each derived file.

\paragraph{Assertions, not conventions.} The analysis refuses to pool two runs unless their $\varepsilon$ grids are verified identical, and asserts that the first run carries no depth column before reporting any depth statistic; that run's schema predates depth recording. The profile analysis aborts unless it reproduces the scan's own floor and minimum on every row; it does so on $6{,}538/6{,}538$. Manuscript numbers are substituted into the text from the results file by a renderer that fails on any unresolved placeholder.

\paragraph{Intervals.} Several bins contain fewer than fifteen events and one contains none, so every rate carries an exact Clopper--Pearson interval; normal approximations are not used anywhere in \S\ref{sec:valley}.

\paragraph{The stratified scan.} The second scan of \S\ref{sec:fartail} ran as six independent single-process shards with no inter-process communication, each appending one atomic line per completed chunk: counts and valleys in the same line, so a crash cannot record a valley without its denominator. Chunk seeds are derived from the run seed, the bin and the chunk index alone, so any chunk is reproducible from its journal line and the run is resumable without double-counting. All $1{,}200$ chunks completed with no restarts. Before the run, the closed-form floor was required to reproduce the Choi/PPT floor on $70$ circuits per bin, kept and rejected alike, aborting on any mismatch.

\paragraph{Independent cross-check.} The closed-form floor of \eqref{eq:floorclosed}, derived from Theorem~\ref{thm:normalform} with no Choi matrix anywhere in it, reproduces the floor computed by partial-transpose bisection on all $7{,}056$ valley circuits, on both certified counterexample circuits, and on the 150-circuit floor-distribution run reproduced from its own seed, histogram for histogram. Two routes with no shared code path agree exactly.

\subsection{Reproducibility}

\begin{center}\footnotesize
\begin{tabular}{@{}p{0.34\textwidth}p{0.46\textwidth}r@{}}
\toprule
script & establishes & time\\
\midrule
\texttt{certified\_model.py} & Arb model, witness and Sylvester primitives & (library)\\
\texttt{certified\_index.py} & closed-form $A^n$, $c_n$; fast index certification & (library)\\
\texttt{verify\_ref.py} & reference point & 4 s\\
\texttt{verify\_interval.py} & full temperature axis & 90 s\\
\texttt{verify\_floor\_axis.py} & floor theorem and directionality & 25 s\\
\texttt{verify\_tradeoff.py} & 60-cell surface & 1.7 s\\
\texttt{verify\_asymptotics.py} & weak-coupling indices & 40 s\\
\texttt{verify\_spectral.py} & spectral data and discrimination & 60 s\\
\texttt{verify\_monotonicity.py} & certified 3,200-point scan & 23 s\\
\texttt{verify\_continuum.py} & five-dimensional covering & 63 s\\
\texttt{certify\_theorem3.py} & Theorem~\ref{thm:normalform}, closed-form floor, floor measure & 0.2 s\\
\texttt{closeout/extract.sh} & frozen snapshot of the valley scan & 10 s\\
\texttt{closeout/closeout.py} & rates, model comparison, depth regression & 3 min\\
\texttt{closeout/profile\_valleys.py} & $\nEB(\Phi_p)$ profiles for every valley & 4 min\\
\texttt{strat/valley\_scan\_smallfloor.py} & stratified scan, 12M draws, 6 shards & 1.8 h\\
\texttt{cert3/certify\_scan3.py} & 92 certificates: 23 floors, valleys, depths & 0.2 s\\
\texttt{cert3/certify\_denominators.py} & all 12,000,000 floor decisions in Arb & 6 min\\
\texttt{cert3/certify\_scan3\_stats.py} & exact intervals and tails, verified & 90 s\\
\bottomrule
\end{tabular}
\end{center}

\noindent Environment: \texttt{python-flint} 0.9.0 (Arb), NumPy and SciPy for witness generation and for the scan statistics only. Working precision 256 bits throughout except the spectral section at 1024 bits. Total certification time for the \texttt{verify\_*} and \texttt{certify\_*} scripts is eleven minutes on a single core, of which six are the $12$ million floor decisions of \S\ref{sec:scan3cert}; the statistical close-out adds a further seven. The stratified scan itself is the only long-running item, and it is resumable.

\section{Outlook}\label{sec:outlook}

The open problem of the spectral-radius refinement is now internal to this
paper, and part of it has already been met.

The obvious open problem is Eq.~\eqref{eq:open} of \S\ref{sec:spectral}: whether $\|A\|_{2}$ may be
replaced by the spectral radius $\rho(A)$ in Theorem~\ref{thm:qubit}. It held
at every sampled point of the family, and Fig.~\ref{fig:landscape}(b) shows
that $\rho(A)$, not $\|A\|_{2}$, is what actually organises the data. The
obstruction is non-normality, and it is the same distinction between
\emph{stability gap} and \emph{contraction margin} that had to be disentangled
in \cite[Sec.~II\,B]{PaperI}.

\begin{enumerate}
\item \textbf{Characterise the sign of $\beta$.} \S\ref{sec:valley} shows $\beta < 0$ occurs, on a set whose rate is now measured twice over: marginally, a Gaussian cutoff with exponent $2.02\,[1.90,2.16]$; conditionally, an exponent that climbs with the architecture floor from $0.80$ to $3.60$. Why it climbs is open, and a mechanism that reproduces only the marginal has not reproduced the phenomenon. The remaining questions are otherwise structural rather than empirical. Is the two-condition criterion of \S\ref{sec:wherefails} ($\varphi\to0$ \emph{and} $\theta\to\pi/4$) necessary as well as descriptive? Is $|\beta|$ bounded, given that the observed depths never exceed $4$ in $6{,}538$ valleys, with the certified anchor at $5$? And why is the sub-floor set always a single connected interval in $p$ ($6{,}538/6{,}538$), never a union?

\item \textbf{Why does partial polarisation hurt?} A maximally mixed bath randomises; a polarised bath \emph{drives} the system toward a displaced fixed point, which is measure-and-prepare-like. Whether this is the mechanism is untested. Any candidate must reproduce the shape of \S\ref{sec:window}: switching on above $p \approx 0.1$--$0.2$, saturating over a broad plateau, and switching off again before $p = 1$.

\item \textbf{The spectral-radius bound.} \S\ref{sec:spectral} shows the $\rho(A)$ form is asymptotically saturated where the $\opn{A}$ form is not. A proof for non-normal $A$ remains open.

\item \textbf{Closed forms} for the weak-coupling constants $1.3616$ and $0.842$, and for the prefactor $C$ in \eqref{eq:floorasym}, which is tightly concentrated (interquartile ratio $1.33$--$1.37$ over the scanned circuits, correlation $0.999$ on logs) but is not constant.

\item \textbf{Consequences of Lemma~\ref{lem:structuralzero}.} Whether releasing supplementarity, restoring $A_{yz} \neq 0$, changes the floor or the sign of $\beta$ is untested. Theorem~\ref{thm:normalform} already covers the floor at $p=0$ for general branch angles, so the question is really about finite $p$.
\end{enumerate}

At $d=2$ the two frameworks meet in a way that limits both. Since
$\EB=\PPT$ in $M_{2}$ by Peres--Horodecki, the index coincides with the
PPT index $n_{\PPT}(\Phi)=\inf\{n:\Phi^{n}\in\PPT\}$ outright, which is
Eq.~\eqref{eq:indexformula}; in general $d$ the two differ, and
Corollary~4.1 of~\cite{Park2026} relates them only through the qualitative
equivalence ``eventually EB iff eventually PPT''. The
same identity means a qubit message register cannot support a
\emph{bound-entangled} loop, one whose Choi state is PPT but not
separable, because no such state exists in $2\otimes2$. Probing that regime
requires $d\ge3$, where the Schmidt number~\cite{TH00} becomes a nontrivial
invariant, the Choi negativity witness ceases to be complete, and
certified separability, rather than certified positivity of a partial
transpose, becomes the operative difficulty. We regard that as the natural
sequel rather than an extension of the present work.

Beyond that, two directions. Lifting $M$ to a qutrit opens the
bound-entangled class discussed above, which has no analogue
at $d=2$ and would require certified separability rather than certified
negativity. And the index is defined for the trajectory, not the fixed point,
which invites the question of what other trajectory-valued criteria a
future-referential loop admits.

\appendix

\section{Certification of the square}
\label{app:cert}

Let $p=(\theta,\kappa,\varphi,\beta)$ and $p_{0}$ the reference point
\eqref{eq:refpoint}, $s=0.0013\pi$. Write $\eta_{i}=\sup_{\|p-p_{0}\|_\infty\le s}$
for the certified supremum over the square.

\emph{Step 1: derivatives.} The round unitary factorises as
$U=R_{\beta}U_{f}U_{w}U_{W}$, so each $\partial U/\partial p_{i}$ is a product
of the same factors with one differentiated; each generator squares to the
identity or is a rotation, so the derivatives are elementary. Propagating
through $\Phi(\cdot)=\Tr_{FL}[U(\cdot\otimes\ket{00}\bra{00})U^{\dagger}]$ and
extracting the Bloch pair as in Eq.~\eqref{eq:bloch} gives, in ball arithmetic,
\begin{equation}
  \sum_{i}\eta_{i}\|\partial A/\partial p_{i}\|_{F}\le4.349,\qquad
  \sum_{i}\eta_{i}\|\partial\bm{c}/\partial p_{i}\|\le1.633 ,
\end{equation}
whence Eq.~\eqref{eq:deltas}. Unlike a direct interval evaluation, this is
insensitive to the dependency problem: a factor-two loss in a derivative
enclosure costs a factor two in $\delta_{A}$, which the margins absorb.

\emph{Step 2: propagation.} With $T\ge\|A_{0}\|_{2}$ certified by Sylvester's
criterion applied to $T^{2}\openone-A_{0}^{\!\top}A_{0}$, and
$a=T+\delta_{A}\ge\|A\|_{2}$ on the square,
\begin{align}
  \|A^{n}-A_{0}^{n}\|_{F} &\le \textstyle\sum_{k=0}^{n-1}a^{k}\,\delta_{A}\,T^{\,n-1-k},\\
  \|\bm{c}_{n}-\bm{c}_{0,n}\| &\le \textstyle\sum_{k=0}^{n-1}
     \bigl[k\,a^{k-1}\delta_{A}\|\bm{c}\|+T^{k}\delta_{c}\bigr].
\end{align}
Both are monotone in every argument, so substituting upper bounds is safe. Note
that $\|\bm{c}\|$, not $\|\bm{c}_{0}\|$, is the correct norm in the second
line, and that $a$ must dominate $\|A_{0}\|_{2}$ as well as $\|A\|_{2}$;
explicit counterexamples exist to both variants.

\emph{Step 3: closing.} Lemma~\ref{lem:iso} converts these into
$\|J(\Phi^{n})-J(\Phi_{0}^{n})\|_{F}$, partial transposition preserves it,
$\|\cdot\|_{\infty}\le\|\cdot\|_{F}$, and Weyl's inequality yields
Eq.~\eqref{eq:drifts}.

Two pitfalls are worth recording for anyone reproducing this. The half-width
$s$ must be used to inflate the parameter box, not the chord $2\sin(s/2)$ that
appears in the Proposition~4 bounds of~\cite{PaperI}; and inflating a parameter
\emph{ball} must add radii rather than re-centre on its midpoint, which would
silently discard the input enclosure.

\section{Code and certificates}\label{app:code}

All results are reproducible from the software archive \cite{Software} and
from the ancillary files accompanying this paper, which ship in two
subfolders, \texttt{zero-temperature/} and \texttt{thermal/}. The first
carries the zero-temperature stack: the
certificate script executes $33$ checks in $256$-bit Arb ball arithmetic
\cite{Arb} via \texttt{python-flint} \cite{pythonflint}, of which C9a--C9g
establish \S\ref{sec:certified} and C10a--C10c verify that
Theorem~\ref{thm:qubit} is correctly instantiated on this channel; a
second-stack audit at $60$ decimal digits and an independent
re-implementation reproduce every quoted figure; a non-vacuity test re-runs
the drift chain at widened half-widths; and Figure~\ref{fig:landscape} is
regenerated by a script that asserts every landscape statistic quoted in
\S\ref{sec:landscape}, so the figure cannot silently drift from the text.
The second carries the thermal stack of \S\ref{sec:numerics}: the Arb
model and index machinery, the normal-form and closed-form-floor certificates,
the counterexample circuits, the stratified scan and its $92$ certificates,
the $12{,}000{,}000$ certified floor decisions, and the statistical close-out,
with the script inventory and run times of \S\ref{sec:scanprov}. Every ball
comparison is certified: a relation evaluates true only when it holds for all
values in the enclosing balls.

\section*{License}
This work is licensed under the Creative Commons Attribution 4.0 International
License (CC BY 4.0).

\end{document}